\newtheorem{theorem}{Theorem}
\newtheorem{corollary}{Corollary}
\newtheorem{definition}[theorem]{Definition}
\newtheorem{proposition}{Proposition}
\newenvironment{proof}[1][Proof]{\noindent\textbf{#1.} }{\ \rule{0.5em}{0.5em}}
\newcommand{\cS}{{\mathcal{S}}}
\newcommand{\cL}{{\mathcal{L}}}
\newcommand{\beps}{{\pmb{\epsilon}}}
\newcommand{\bbE}{\mathbb{E}}
\newcommand{\RL}{\text{\tiny RL}}
\newcommand{\CRL}{\text{\tiny CRL}}
\newcommand{\ERL}{\text{\tiny ERL}}
\newif\ifnotes\notestrue
\def\htien#1{}
\begin{document}






\newcolumntype{C}{>{\centering\arraybackslash}p{4em}}

\title{\textbf{Constrained Recursive Logit for Route Choice Analysis}}
\author[1]{Hung Tran}
\author[1,*]{Tien Mai}
\author[2]{Minh Ha Hoang}
\affil[1]{\it\small
School of Computing and Information Systems, Singapore Management University}
\affil[2]{\it\small
SLSCM and CADA, Faculty of Data Science and Artificial Intelligence, College of Technology, National Economics University, Hanoi, Vietnam}
\affil[*]{\it\small
Corresponding author, atmai@smu.edu.sg}

\maketitle
\begin{abstract}
The recursive logit (RL) model has become a widely used framework for route 
choice modeling, but it suffers from a key limitation: it assigns non-zero 
probabilities to all paths in the network, including those that are unrealistic, 
such as routes exceeding travel-time deadlines or violating energy constraints. 
To address this gap, we propose a novel \emph{Constrained Recursive Logit (CRL)} model 
that explicitly incorporates feasibility constraints into the RL framework. CRL 
retains the main advantages of RL---no path sampling and ease of prediction---but 
systematically excludes infeasible paths from the universal choice set.  
The model is inherently non-Markovian; to address this, we develop a tractable 
estimation approach based on extending the state space, which restores the 
Markov property and enables estimation using standard value iteration methods.
 We prove that our estimation method admits a 
unique solution under positive discrete costs and establish its equivalence to a 
multinomial logit model defined over restricted universal path choice sets. Empirical experiments 
on synthetic and real networks demonstrate that CRL improves behavioral realism 
and estimation stability, particularly in cyclic networks. 
\end{abstract}



{\bf Keywords:}  
Constrained Recursive Logit; Route choice modeling;  State-space extension; Transportation networks.




%


\section{Introduction}\label{sec:intro}
Route choice modeling is a central component of travel demand analysis, providing 
the basis for applications ranging from network design and traffic management to 
policy evaluation and infrastructure pricing. Among the existing frameworks, the 
recursive logit (RL) model \citep{FosgFrejKarl13} has become particularly popular 
due to its computational advantages: it avoids sampling of alternative routes, can 
be consistently estimated using maximum likelihood methods, and offers a tractable 
structure for prediction through dynamic programming.  

Despite these strengths, a fundamental limitation of the RL model lies in its 
assumption of a \emph{universal choice set}, where all possible paths between an 
origin and a destination are included in the consideration choice sets. This 
assumption often results in the assignment of positive probabilities to paths that 
are clearly unrealistic, such as those with excessively long travel times or those 
that violate natural feasibility conditions (e.g., exceeding an energy budget or 
ignoring temporal deadlines). As a consequence, the RL model may produce biased 
parameter estimates and misleading behavioral predictions in contexts where 
constraints play a defining role. To date, this limitation has not been 
satisfactorily addressed in the literature, leaving a gap between theory and 
practical travel behavior.  

In this paper, we propose a novel \emph{Constrained Recursive 
Logit (CRL)} model that explicitly incorporates feasibility constraints into the 
recursive logit framework. The proposed CRL model retains the computational 
advantages of the classical RL formulation---it requires no sampling of path 
alternatives and remains straightforward to predict---while systematically 
eliminating infeasible paths from the universal choice set. By doing so, the CRL 
model ensures that estimated choice probabilities reflect only realistic and 
feasible route choices.  
Specifically, we make the following contributions:
\begin{itemize}
    \item \textbf{Constrained RL formulation.}  
    We propose a CRL framework that explicitly 
    incorporates feasibility constraints into the RL setting. In particular, we 
    enforce that any path whose accumulated cost at some state exceeds a 
    prespecified threshold is assigned zero choice probability and is excluded 
    from the universal choice set. This formulation is motivated by several 
    practical applications, such as route choice with travel-time deadlines or 
    electric vehicle routing with battery capacity limits. We further show that, 
    under these constraints, the induced choice probabilities are inherently 
    non-Markovian, as they depend on the full sequence of previously visited 
    states.  

    \item \textbf{Tractable estimation via state-space extension.}  
    Since the non-Markovian property renders direct estimation infeasible, we 
    propose a novel approach based on extending the state space to include the 
    accumulated cost. This transformation restores the Markovian property and 
    enables estimation using standard value iteration and the Nested Fixed-Point 
    (NFXP) algorithm \citep{Rust87}. Importantly, we prove that when costs are positive and 
    discrete, the resulting linear system for computing the expected value 
    function always admits a unique solution, thereby guaranteeing the existence 
    and stability of the estimation procedure.  

    \item \textbf{Equivalence to path-based multinomial logit (MNL) under restricted universal choice sets.}  
    We establish the equivalence between the CRL formulation and a standard 
    path-based MNL model defined over the restricted 
    universal choice set of feasible paths. Here we note that while constraints could in principle 
    be handled in a path-based MNL model by sampling paths and discarding 
    infeasible ones, such an approach suffers from the same drawbacks as 
    path-sampling methods in general: incomplete coverage of observed paths and 
    limited predictive capability \citep{FosgFrejKarl13,zimmermann2020tutorial}. By contrast, CRL eliminates infeasible paths 
    automatically without requiring path sampling.  

    \item \textbf{Extensions to multiple constraints and nested RL.}  
    We demonstrate that the framework is general and flexible. In particular, we 
    extend the CRL formulation to accommodate multiple simultaneous constraints 
    (e.g., combining travel-time and energy limits) and show how it can be 
    integrated into the nested recursive logit (NRL) model \citep{MaiFosFre15} to capture correlation 
    structures across the transportation network.  

    \item \textbf{Empirical validation.}  
    We conduct extensive experiments on both synthetic and real transportation 
    networks. The results confirm that CRL provides more behaviorally realistic 
    predictions than the standard RL model when constraints are present, and that 
    its estimation procedure remains stable especially in highly cyclic networks where 
    RL estimation typically fails.  
\end{itemize}

Our CRL model might have wide-ranging applications in constrained travel settings. 
Examples include trip planning under deadlines, electric vehicle routing where 
battery capacity limits the feasible path set, and shared mobility systems such as 
bike-sharing, where trip duration or rental constraints restrict traveler 
decisions. More generally, the CRL model bridges the gap between the theoretical 
elegance of recursive logit and the practical realities of constrained travel 
behavior, offering a robust and versatile tool for both empirical estimation and 
policy analysis in modern transportation systems.  

Although our discussions in this paper  focus on the route choice problems, the proposed CRL 
framework has broader applicability. In particular, it provides a general approach 
for incorporating feasibility constraints into dynamic discrete choice settings, 
which are widely used in fields such as activity-based travel demand modeling and 
sequential decision processes. For example, activity-based models often require 
capturing temporal or resource constraints in daily activity schedules 
\citep{BowmanBenAkiva2001}. Similarly, in dynamic discrete choice models 
(DDCMs), feasibility conditions such as budget limits, time windows, or capacity 
restrictions play an important role in determining choice behavior 
\citep{Rust1987, AguirregabiriaMira2010}. By systematically excluding infeasible 
alternatives from the universal choice set, CRL can offer a unified and tractable way 
to enhance behavioral realism across these domains.

The remainder of the paper is structured as follows. 
Section~\ref{sec:review} provides a brief review of the related literature. 
Section~\ref{sec:RL} presents the standard RL model, which 
serves as the foundation for our proposed CRL framework. 
Section~\ref{sec:CRL} introduces the CRL formulation together with illustrative 
examples, while Section~\ref{sec:model-properties} explores the fundamental 
properties of the model. Estimation methods based on the extended state-space 
representation are discussed in Section~\ref{sec:extended-state-space}. 
Section~\ref{sec:exp} reports the results of numerical experiments, and 
Section~\ref{sec:concl} concludes the paper. 
The Appendix provides further details, including extensions of CRL to handle 
multiple constraints and nested RL.


\section{Literature Review}\label{sec:review}
The literature on route choice modeling can broadly be categorized into 
path-based and link-based recursive approaches. Path-based models 
\citep[see][for a review]{Prat09} rely on sampling of paths, which makes the 
resulting parameter estimates sensitive to the sampling procedure. Even when 
correction terms are introduced to ensure consistent estimation, prediction 
remains computationally challenging, particularly in large networks.  

In contrast, link-based recursive models \citep{FosgFrejKarl13,MaiFosFre15,mai2021RL_STD,oyama2017discounted} build on the dynamic discrete choice 
framework of \citet{Rust87} and are essentially equivalent to discrete choice 
models defined over the set of all feasible paths. These models have several 
attractive advantages: they can be consistently estimated without requiring 
path sampling, and they allow efficient prediction through dynamic programming. 
The first RL model was introduced by \citet{FosgFrejKarl13}, 
and has since been extended in various directions, such as accounting for 
correlations between path utilities 
\citep{MaiFosFre15,Mai_RNMEV,MaiBasFre15_DeC}, handling dynamic networks 
\citep{de2020RL_dynamic}, incorporating stochastic time-dependent link costs 
\citep{mai2021RL_STD}, or modeling discounted behavior \citep{oyama2017discounted}. 
Applications of recursive models can be found in traffic management 
\citep{BailComi08,Melo12}, network pricing \citep{zimmermann2021strategic}, network interdiction \citep{mai2024stackelberg}, and 
beyond. A comprehensive review is provided in \citet{zimmermann2020tutorial}.  

Despite their many strengths, most existing RL models share two fundamental 
limitations. First, they do not impose feasibility constraints, which implies 
that \emph{any} path in the network is assigned a non-zero probability, even if 
it is clearly unrealistic (e.g., excessively long travel times or violations of 
energy or temporal constraints). Second, RL is known to suffer from instability 
in estimation when applied to highly cyclic networks, which is typically the 
case in real-world transportation systems \citep{MaiFrejinger22}. These issues limit the behavioral 
realism and empirical robustness of RL models.  
Our proposed CRL framework retains the computational and 
estimation advantages of the classical RL model, but introduces feasibility 
constraints directly into the recursive formulation. In doing so, CRL ensures 
that infeasible routes are systematically excluded from the universal choice 
set, thereby resolving both the issue of unrealistic path probabilities and the 
instability of estimation in cyclic networks. This makes CRL a more robust and 
behaviorally consistent foundation for route choice modeling in constrained 
transportation systems.

It is worth noting that our CRL model is closely related to  
\emph{prism-based approaches}, which also extend RL models by restricting the 
universal path set. Specifically, \citet{OyaHat19} introduced a prism-based path 
set restriction for Markovian traffic assignment, pruning paths outside a 
feasible travel-time prism and thereby avoiding the assignment of positive 
probabilities to unrealistically long routes. Building on this idea, 
\citet{oyama2023prism_estimation} applied prism constraints to RL estimation, 
showing that they improve stability and better capture positive network 
attributes by excluding infeasible alternatives. While prism-based methods incorporate path feasibility, they are limited to 
unit-value constraints (i.e., assigning a cost of one per transition). In contrast, 
our CRL framework generalizes this idea by accommodating a broader class of 
constraints---such as travel-time deadlines, energy budgets, or rental duration 
limits---directly within the recursive structure. This generalization yields a 
unified and flexible framework for modeling constrained route choice behavior.

Our CRL framework is also related to the literature on constrained discrete choice 
models, which has primarily focused on introducing feasibility restrictions within 
classical static frameworks. For example, the \emph{constrained nested logit model} 
\citep{constrained_nested_logit} incorporates both hard constraints, which eliminate 
infeasible alternatives, and soft constraints, which reduce the attractiveness of 
certain options. Likewise, spatially constrained destination choice models 
\citep{constrained_destination_choice} impose restrictions based on land use and trip 
distribution, thereby yielding more behaviorally realistic predictions. While these 
approaches highlight the value of incorporating constraints into discrete choice 
models, they remain static and path-based, and therefore differ fundamentally from 
link-based recursive formulations such as CRL in both their estimation strategies and 
predictive capabilities.  

The RL and CRL models also share a strong connection with the literature on dynamic 
discrete choice models (DDCMs) \citep{Rust87,aguirregabiria2010ddcm,mai2020relation}. Recent 
extensions of DDCMs \citep{bruneel2025dynamic,Chen2025ModelAdaptive} have advanced 
computational techniques for handling large state spaces and, in some cases, allow 
for implicit incorporation of constraints. However, these models are predominantly 
applied in economics and industrial organization rather than in transportation 
networks. By contrast, our work focuses specifically on 
transportation networks, and establishes a direct connection between the link-based 
RL framework and the path-based MNL model under both universal 
and restricted universal choice sets. In this sense, the CRL model bridges the gap 
between the constrained discrete choice literature and recursive route choice 
modeling, providing both behavioral realism and computational tractability.

\section{Background - The Recursive Logit Model}\label{sec:RL}

Consider a set of states $\cS$, where each state $s \in \cS$ represents a link or node in the transportation network. Let $N(s) \subseteq \cS$ denote the set of states that are directly reachable from state $s$. Here, $N(s)$ can represent the set of outgoing nodes (or links) from state $s$. For each trip with destination $d$, we introduce $d$ as an absorbing state into the system, making the set of all states $\widetilde{\cS} = \cS \cup \{d\}$. 

In the context of the recursive logit (RL) model, each path choice is modeled as a sequence of state choices, which could correspond to a sequence of link or node choices in the transportation network. Under the random utility maximization framework, each pair of states $(s, s')$, where $s' \in N(s)$, is associated with an instantaneous random utility:
\[
u(s'|s) = v(s'|s) + \mu \epsilon(s'),
\]
where:
\begin{itemize}
    \item $v(s'|s)$ is the deterministic utility, representing the traveler’s perceived utility of moving from state $s$ to state $s'$,
    \item $\epsilon(s')$ is a random term, assumed to be i.i.d. extreme value type I,
    \item $\mu$ is the scale parameter.
\end{itemize}
At each state, the traveler is assumed to choose the next state to maximize their expected utility:
\[
\bbE_{\epsilon} \left[ \max_{s' \in N(s)} \{ v(s'|s) + V(s') + \mu \epsilon(s') \} \right],
\]
where $V(s)$ is the expected maximum utility from state $s$ to the destination $d$. This expected utility function can be recursively calculated as:
\[
V(s) =
\begin{cases}
    0, & s = d, \\
    \mu \ln\left(\sum_{s' \in N(s)} e^{\frac{1}{\mu}(v(s'|s) + V(s'))}\right), & \forall s \in \cS.
\end{cases}
\]

Here we note that the value function $V(s)$ could be  dependent of both the destination $d$ and the traveler's characteristics, but  these indicators are omitted here for notational simplicity.

Given the value function $V$, the probability that a traveler chooses state $s'$ from state $s$ can be computed as:
\[
P(s'|s) =
\begin{cases}
    \frac{\exp(v(s'|s) + V(s'))}{\sum_{t \in N(s)} \exp(v(t|s) + V(t))}, & \text{if } s' \in N(s), \\
    0, & \text{otherwise}.
\end{cases}
\]
The probability of observing a path $\sigma = \{s_0, \ldots, s_n\}$ can then be calculated as:
\[
P(\sigma) = \prod_{t=0}^{n-1} P(s_{t+1}|s_t) = \exp(v(\sigma) - V(s_0)),
\]
where $v(\sigma)$ is the total deterministic utility of the path $\sigma$, defined as:
$v(\sigma) = \sum_{t=0}^{n-1} v(s_{t+1}|s_t)$.

For model estimation, each deterministic utility function \(v(s'|s)\) can be specified as a (linear) function of various link or node attributes, such as travel time and cost, along with parameters to be estimated. Given a set of observed paths, the model parameters can be estimated using Maximum Likelihood Estimation (MLE) by maximizing the log-likelihood of these observations \citep{Mcfadden2001economicNobel,FosgFrejKarl13}.

It is important to note that the RL model can be considered a specific case of the logit-based dynamic discrete choice model \citep{aguirregabiria2010ddcm,mai2020relation,Rust87}. To elaborate, let \(\phi_t(s_t | \beps_t): \cS \rightarrow \cS\) represent a decision rule at time step \(t = 0, 1, \ldots\), determining the next state \(s_{t+1}\) based on the current state \(s_t\) and the realization of the random terms \(\beps_t\). The RL model can then be framed as a logit-based discrete choice model, where the objective is to select a set of decision rules that maximize the expected long-term utility:
\[
\max_{\phi_0,\phi_1,\ldots} \bbE_{\beps}\left[\sum_{t = 0}^\infty v(s_{t+1}|s_t) +\mu \epsilon(s_{t+1})\right].
\]


\section{Constrained Recursive Logit Model}\label{sec:CRL}
In the following, we present the formal formulation of the CRL model and provide 
illustrative examples that demonstrate how it captures constrained route choice 
behavior in networks.

\subsection{ Modeling Formulation}
The CRL model extends the standard RL framework by explicitly incorporating 
constraints into travelers’ route choice behavior.
 Specifically, we assume that each movement from state \(s_t\) to state \(s_{t+1}\) is associated with a cost function \(c(s_{t+1} \mid s_t)\), reflecting the cost incurred by travelers when transitioning from \(s_t\) to \(s_{t+1}\). These cost values can take either positive or negative values. Travelers are assumed to make route choice decisions by maximizing their expected accumulated random utility while ensuring that the total accumulated cost at every point in time does not exceed a given threshold $\alpha$.  

For example, in the context of electric vehicles (EVs), drivers must ensure that, at any point in time,  the total consumed energy does not exceed the remaining battery charge before reaching a charging station or their destination. Here, \(c(s_{t+1}|s_t)\) represents the energy consumed on the path from \(s_t\) to \(s_{t+1}\), which depends on factors like travel time and road conditions, while \(\alpha\) corresponds to the initial battery charge at the start of the trip.  The cost function \(c(s_{t+1}|s_t)\) typically takes positive values, reflecting energy consumption. However, it can take negative values if the vehicle visits a charging station, where it can partially \textit{recharge} or \textit{replace the battery}. This flexibility allows the model to realistically capture travel scenarios involving resource constraints. Another example is the context of bike-sharing systems, where companies often impose a time limit on each rental session. For instance, a bike may need to be returned to a designated station within a specific duration, such as 30 minutes, to avoid additional fees or penalties. For example, in Singapore, Anywheel's daily pass allows users unlimited rides per day, but each trip cannot exceed 30 minutes or they incur a surcharge of S\$0.50 per 30 minutes (\url{https://www.sharedmobility.news/bike-sharing-time-crack-free-riding}). Such a requirement forces travelers to plan their routes carefully to ensure they can return the bike within the allowed time frame. If the total trip exceeds the time budget, the traveler may need to return the bike to a station, complete the current rental session, and start a new one to continue their journey. 

\paragraph{Non-Markovian Property.} It can be observed that a policy solving  the above constrained modeling problem would be \textit{non-Markovian }(and therefore not \textit{stationary}). This means that the choice probability at a state $s_t$ would depend not only on the state $s_t$ itself but also on the entire historical trajectory $\{s_0, \ldots, s_t\}$. 
To illustrate this, consider the simple example shown in Figure \ref{fig:example_figure}. The network contains four nodes, where $a$ is the origin and $d$ is the destination. Each edge is labeled with its associated cost. Let $\alpha = 12$, which implies that we need to  seek a policy assigning zero probability to paths whose cost exceeds 12.  In this scenario, from node $c$, if the traveler takes the edge $[a,c]$, both outgoing edges from $c$ to $d$ will form paths with a total cost less than 12. Thus, both edges $[c,4,d]$ and $[c,1,d]$ are feasible choices. On the other hand, if the traveler takes the path $[a,b,c]$, they cannot take the edge $[c,4,d]$ because the total cost would exceed $\alpha$. In this case, $[c,1,d]$ becomes the only feasible path to reach the destination.

From this example, it is evident that the policy at state $c$ depends on the historical path the traveler took to arrive at $c$. This demonstrates the non-Markovian property of the constrained policy. This phenomenon can also be observed in a more general context. For instance, in an EV scenario, at a state, if the driver chooses a path that helps conserve energy, they might be able to reach the destination directly without stopping at a charging station. In contrast, if the driver has depleted energy due to a historically long or energy-intensive route, they will need to stop at a charging station to ensure they can reach their destination.

\begin{figure}[h!]
    \centering
    \includegraphics[width=0.7\linewidth]{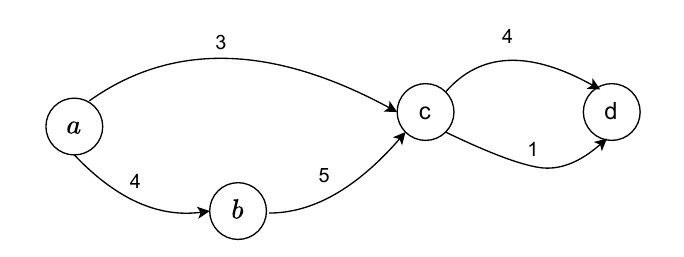}
    \caption{Example to illustrate the non-Markovian property of the CRL.}
    \label{fig:example_figure}
\end{figure}

 \paragraph{Mathematical Formulation. }
To model such a non-Markovian behavior, let \(\pi(s_{t+1} \mid S_t)\) denote the probability that the traveler chooses the next state \(s_{t+1}\) conditional on a joint state \(S_t = \{s_0, \ldots, s_t\}\), which represents the historical path from the origin to state \(s_t\). Due to the incorporation of constraints, the resulting decision process is no longer Markovian—the policy at state \(s_t\) may depend on the entire trajectory up to \(s_t\). We will discuss the non-Markovian property in more detail in the next section.

At each state \(s_t\), the traveler chooses the next state \(s_{t+1}\) by maximizing the following expected utility:
\[
\mathbb{E}_{\epsilon} \left[ \max_{s_{t+1} \in N(s_t)} \left\{ v(s_{t+1} \mid s_t) + V(S_{t+1}) + \mu \epsilon(s_{t+1}) \right\} \right],
\]
where \(v(s_{t+1} \mid s_t)\) denotes the immediate deterministic utility of moving from \(s_t\) to \(s_{t+1}\), \(V(S_{t+1})\) is the expected utility from the joint state \(S_{t+1} = S_t \cup \{s_{t+1}\}\) to the destination, and \(\epsilon(s_{t+1})\) are i.i.d. Gumbel (Type I extreme value) distributed random utilities.

Under this assumption, the value function \(V(S_t)\) from the joint state \(S_t\) is computed as:
\begin{equation}\label{eq:V(S-t)}
   V(S_t) =
\begin{cases}
    0, & \text{if } s_t = d \text{ and } c(S_t)\leq \alpha\\
    -\infty, & \text{if } C(S_t) > \alpha, \\
    \mu \ln \left( \sum_{s_{t+1} \in N(s_t)} \exp\left( \frac{1}{\mu} \left[ v(s_{t+1} \mid s_t) + V(S_{t+1}) \right] \right) \right), & \text{otherwise},
\end{cases} 
\end{equation}
where \(C(S_t)\) is the accumulated cost along the path \(S_t\), given by
\[
C(S_t) = \sum_{i=0}^{t-1} c(s_{i+1} \mid s_i),
\]
and \(\alpha\) is the upper bound on the allowable accumulated cost. Intuitively, when the cost exceeds this threshold, the value function becomes \(-\infty\), ensuring that the corresponding choice probability becomes zero.

Given the value function, the choice probability is given by a  MNL model:
\[
\pi(s_{t+1} \mid S_t) = 
\begin{cases}
    \frac{\exp\left(\frac{1}{\mu} (v(s_{t+1} \mid s_t) + V(S_{t+1})) \right)}{\sum_{s \in N(s_t)} \exp\left( \frac{1}{\mu}(v(s \mid s_t) + V(S_t \cup \{s\})) \right)}, & \text{if } s_{t+1} \in N(s_t), \\
    0, & \text{otherwise}.
\end{cases}
\]
Finally, given an observed path \(\sigma = \{s_0, \ldots, s_T\}\), the path probability under the constrained RL model is given by:
\[
P(\sigma) = \prod_{t=0}^{T-1} \pi(s_{t+1} \mid S_t).
\]

It can be seen that, under the value function defined above, at any state \( S_t \), if there exist two candidate next states \( s \) and \( s' \) such that the accumulated cost of transitioning to \( s \) remains below the threshold \(\alpha\), while the accumulated cost of transitioning to \( s' \) exceeds \(\alpha\), then—as expected—the probability of choosing state \( s' \) will be zero. We formalize this observation in the following proposition:
\begin{proposition}
Let \( S_t = \{s_0, s_1, \ldots, s_t\} \) denote a joint state (i.e., the historical path up to time \( t \)), and let the accumulated cost along this path be \( C(S_t) = \sum_{i=0}^{t-1} c(s_{i+1} \mid s_i) \). Suppose there exists at least one feasible next state \( s \in \mathcal{S} \) such that the total cost of transitioning from \( s_t \) to \( s \) remains within the allowed threshold, i.e., \( C(S_t) + c(s \mid s_t) \leq \alpha \). Then, for any alternative next state \( s' \in \mathcal{S} \) such that the accumulated cost violates the threshold, i.e., \( C(S_t) + c(s' \mid s_t) > \alpha \), the probability of choosing \( s' \) is zero, that is, \( \pi(s' \mid S_t) = 0 \). As a consequence, for any full path \( \sigma = \{s_0, s_1, \ldots, s_T\} \), if there exists a sub-path \( \sigma' = \{s_0, \ldots, s_H\} \) for some \( H \leq T \) such that the accumulated cost along \( \sigma' \) exceeds the threshold \( \alpha \), i.e., \( \sum_{t=0}^{H-1} c(s_{t+1} \mid s_t) > \alpha \), then the probability of  the entire path \( \sigma \) is zero, i.e., \( P(\sigma) = 0 \).
\end{proposition}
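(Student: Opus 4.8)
The plan is to read both assertions directly off the MNL choice formula together with the definition of the value function in \eqref{eq:V(S-t)}, proving the pointwise claim first and then reducing the path-level claim to a single application of it. For the first part, I would start from the observation that selecting the offending next state $s'$ produces the extended joint state $S_t \cup \{s'\}$ whose accumulated cost is $C(S_t) + c(s' \mid s_t) > \alpha$. By the middle branch of \eqref{eq:V(S-t)} this forces $V(S_t \cup \{s'\}) = -\infty$. Substituting into the numerator of $\pi(s' \mid S_t)$ and using the convention $\exp(-\infty/\mu) = 0$, the numerator vanishes. It then remains to argue that the denominator is strictly positive: the hypothesised feasible state $s$ satisfies $C(S_t) + c(s \mid s_t) \le \alpha$, so $V(S_t \cup \{s\})$ is not sent to $-\infty$ by the cost branch, and its contribution $\exp(\frac{1}{\mu}(v(s \mid s_t) + V(S_t \cup \{s\})))$ is a strictly positive summand. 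Hence $\pi(s' \mid S_t) = 0/(\text{positive}) = 0$.

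For the second part, I would reduce the statement about whole paths to one invocation of the first part. Given that some prefix violates the threshold, let $H$ be the \emph{smallest} index with $\sum_{t=0}^{H-1} c(s_{t+1} \mid s_t) > \alpha$, i.e.\ $C(S_H) > \alpha$. Minimality guarantees $C(S_{H-1}) \le \alpha$, so the pre-transition state $S_{H-1}$ is feasible and sits in the ``otherwise'' branch of \eqref{eq:V(S-t)}; moreover the transition $s_{H-1}\to s_H$ is exactly of the infeasible type, since $C(S_{H-1}) + c(s_H \mid s_{H-1}) = C(S_H) > \alpha$. Applying the first part at $S_{H-1}$ with $s' = s_H$ yields $\pi(s_H \mid S_{H-1}) = 0$. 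Since this quantity is one of the factors of $P(\sigma) = \prod_{t=0}^{T-1} \pi(s_{t+1} \mid S_t)$, the product collapses to zero.

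The hard part will be purely one of rigor in the extended-real arithmetic and in avoiding a degenerate denominator, rather than any conceptual difficulty. The clean argument for the second part needs $\pi(s_H \mid S_{H-1})$ to be a genuine $0/(\text{positive})$ rather than an indeterminate $0/0$, which requires the existence of at least one feasible next state from $s_{H-1}$---precisely the hypothesis of the first part. If $s_{H-1}$ happens to admit no feasible continuation, then $V(S_{H-1}) = -\infty$ as well, and I would back-propagate the same reasoning one step earlier (the factor $\pi(s_{H-1} \mid S_{H-2})$ then has a vanishing numerator), repeating until a step with a strictly positive denominator is reached; such a step exists whenever the origin admits at least one feasible path, which is the natural well-posedness assumption. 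I would also state the convention $\exp(-\infty/\mu) = 0$ explicitly and note that a single zero factor already annihilates the product, so that every manipulation involving $-\infty$ is justified and the continuation of $\sigma$ beyond the first infeasible state is immaterial.
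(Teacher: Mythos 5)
Your proof is correct and follows essentially the same route as the paper's: the first claim is read off the $-\infty$ branch of the value function (vanishing numerator in the MNL formula), and the second claim reduces to locating one violating transition along the path, which annihilates the product $P(\sigma) = \prod_t \pi(s_{t+1}\mid S_t)$. You are in fact more careful than the paper at two points: you construct the violating transition explicitly by taking the \emph{minimal} index $H$ with $C(S_H) > \alpha$ (the paper only asserts, non-constructively, that some transition must have zero probability), and you address the degenerate $0/0$ case that arises when $S_{H-1}$ admits no feasible continuation, back-propagating to an earlier factor with a strictly positive denominator---a gap the paper's two-line argument passes over in silence.
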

\begin{proof}
The proof is straightforward. From the recursive formulation of the value 
function $V(S_t)$, consider a state $s_t$. If there exists a successor state 
$s'$ such that 
$C(S_t) + c(s' \mid s_t) > \alpha,$
then the corresponding value function becomes 
$V(S_t \cup \{s'\}) = -\infty$. Consequently, the choice probability formulation 
in~\eqref{eq:V(S-t)} implies that 
$\pi(s' \mid S_t) = 0.$

Now, consider a path $\sigma = \{s_0, s_1, \ldots, s_T\}$. Suppose there exists 
a subpath $\sigma' = \{s_0, \ldots, s_H\}$ for some $H \leq T$ such that the 
accumulated cost along $\sigma'$ exceeds the threshold $\alpha$, i.e.,
$\sum_{t=0}^{H-1} c(s_{t+1} \mid s_t) > \alpha.$
In this case, there must exist an extended state $\widetilde{S}_i = \{s_0,...,s_i\}$ for some 
$i \in \{0,\ldots,H-1\}$ such that $\pi(s_{i+1} \mid \widetilde{S}_i) = 0$. This implies 
that the probability of selecting the entire path $\sigma$ is zero, i.e.
$P(\sigma) = 0,$
which establishes the claim.
\end{proof}

The above proposition highlights an important implication of the model: for any path, if at any point along that path the accumulated cost exceeds the specified upper bound \( \alpha \), then the probability of taking that path is exactly zero. This result is consistent with the real-world constraints observed in several application domains. For instance, in the EV routing example, a driver must ensure that the battery never depletes entirely along the journey. As such, the driver would never consider a path that would result in the energy level dropping to zero at any point along the way, which corresponds to a constraint violation and thus leads to zero path probability. Similarly, in the context of bike-sharing or bike rental systems, consider a policy where users are required to return the bike within 30 minutes before initiating a new rental; otherwise, they incur a substantial penalty \citep{Tsushima2023timelimit}.  In this case, a traveler (or renter) would not consider any path that violates this time constraint at any moment along the route. The model naturally encodes such behavior: any path that violates the constraint, even temporarily, will be assigned zero probability and effectively ruled out from the consideration choice set of paths.

\subsection{Illustrative Examples}

We present some simple examples to illustrate the implications of the  modeling framework discussed above.

\paragraph{Example 1: Route choice  with travel time upper-bounds}
The first scenario considers a route choice problem in which we assume that travelers will not consider invalid route whose travel times  exceed a travel time upper-bound $\alpha$. This is a common case in real world: A taxi driver chooses a route to pick up customer on time, or a shipper plans to deliver packages before a deadline.

\begin{figure}[htbp]
    \centering
    \includegraphics[width=0.5\textwidth]{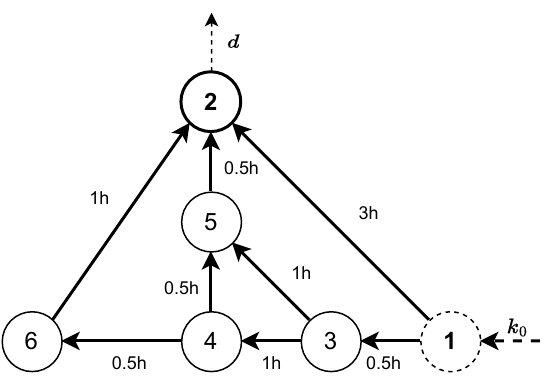} 
    \caption{Toy network with travel times}
    \label{fig:toy-net-1.1} 
\end{figure}

We consider a toy network shown in Figure~\ref{fig:toy-net-1.1}, where node \( k_0 \) is the origin and node \( d \) is the destination. The number on each edge represents its travel time. The utility associated with each transition from node \( s \) to node \( s' \) is computed as \( v(s' \mid s) = -2 \times TT(s' \mid s) \), where \( TT(s' \mid s) \) denotes the travel time from \( s \) to \( s' \). The cost is also defined as the travel time on each edge, and we impose an upper bound \( \alpha = 2.5 \), meaning that any path with a total travel time exceeding 2.5 hours will be assigned zero probability under the constrained model.

In Table~\ref{tab:path-choice-constraints}, we report the path probabilities computed under both the original (unconstrained) RL model and the constrained RL model with \( \alpha = 2.5 \). The values in parentheses represent the total travel time for each path. As observed, in the unconstrained setting, all paths have non-zero probabilities. In contrast, when the constraint is enforced, the probabilities of paths \([1,2]\) and \([1,3,4,6,2]\) become zero due to violation of the travel time limit.


\begin{table}[htb]
\centering
\begin{minipage}{0.45\linewidth}
\centering
\begin{tabular}{l|l|l}
No. & Path & $P(\sigma^n)$ \\\hline
1   & [1,2]& 0.083 \\
2   & [1,3,5,2]& 0.610 \\
3   & [1,3,4,5,2]& 0.224 \\
4   & [1,3,4,6,2]& 0.083
\end{tabular}
\caption*{(a) Path probabilities given by the RL model without constraints}
\end{minipage}
\hfill
\begin{minipage}{0.45\linewidth}
\centering
\begin{tabular}{l|l|l}
No. & Path & $P(\sigma^n)$ \\\hline
1   & [1,2] (3h)                        & 0.000 \\
2   & [1,3,5,2] (2h)          & 0.731 \\
3   & [1,3,4,5,2] (2.5h)   & 0.269 \\
4   & [1,3,4,6,2] (3h)   & 0.000
\end{tabular}
\caption*{(b) Path probabilities given by the constrained RL model}
\end{minipage}
\caption{Path choice probabilities given by the RL and constrained RL models.}
\label{tab:path-choice-constraints}
\end{table}

\subsection{Example 2: Route choice analysis for rechargeable vehicles}
In the previous example, the costs represent travel times, which are always non-negative. In this section, we present a scenario where the cost can take both negative and positive values. Specifically, we consider a route choice problem for rechargeable vehicles that consume some form of energy (e.g., fuel or electricity) to operate. During long trips, such vehicles must periodically refuel or recharge at gas stations or charging stations.

In this scenario, certain locations in the network represent charging stations where a vehicle can replenish its energy up to a specified level in order to continue the trip. The cost function \( c(s' \mid s) \) represents the net energy change when transitioning from state \( s \) to state \( s' \). Specifically, this cost takes a positive value when the vehicle travels along a road segment—corresponding to energy consumption—and a negative value when the vehicle visits a charging station—corresponding to energy replenishment.
At the origin, the initial negative cost reflects the vehicle's starting energy level. The energy constraint imposes that the accumulated cost at any point in time must remain less than or equal to the vehicle's maximum energy capacity, denoted by \( \alpha \). Formally, this is expressed as \( C(S_t) \leq \alpha \) for all time steps \( t \). This condition ensures that the vehicle does not exceed its energy capacity and has sufficient energy to complete its journey.

\begin{figure}[htb] 
    \centering
    \includegraphics[width=0.6\linewidth]{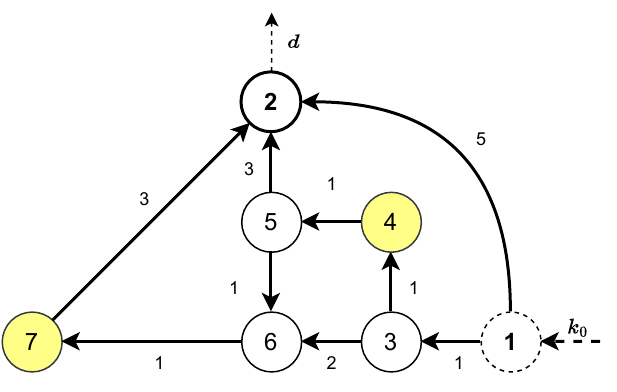}
    \caption{Toy network with charging stations.} 
    \label{fig:energy} 
\end{figure}

We consider the toy network illustrated in Figure~\ref{fig:energy}, which comprises 7 nodes and 10 links. The driver seeks to travel from the origin node \( k_0 \) to the destination node \( d \). Each link is annotated with its travel time, and for simplicity, we assume that the travel time on a link is equal to the amount of energy consumed when traversing it. In this setting, Nodes 4 and 7 are designated as recharging stations. When the vehicle visits either of these nodes, the accumulated cost—representing the total energy consumed—is reset to zero, simulating a full recharge. We model the energy constraint by setting the initial energy level to zero, representing a fully charged vehicle at the start of the trip. Each traversal of a road link increases the accumulated cost (i.e., depletes energy), while visits to recharging stations reset this cost to zero, thereby allowing the vehicle to continue its journey without violating the energy constraint.

\begin{table}[htb]
    \centering
    \begin{tabular}{l|l|l|l|l|l}
No. & Path              & $P(\sigma)$ ($\alpha = 5$) &$P(\sigma)$ ($\alpha = 4$) & $P(\sigma)$ ($\alpha = 3$)                  \\\hline
1   & [1,2]                & 0.644 &0&0 \\
2   & [1,3,4,5,2]        & 0.237&0.665&0 \\
3   & [1,3,4,5,6,7,2]    & 0.032&0.090&1.000 \\
4   & [1,3,6,7,2]    & 0.087&0.245&0
\end{tabular}
    \caption{Path choice probabilities under different values of $\alpha$.}
    \label{tab:energy-5}
\end{table}

\begin{figure}[htb] 
\centering
    \begin{subfigure}{0.45\linewidth}
         \centering
         \includegraphics[width=\textwidth]{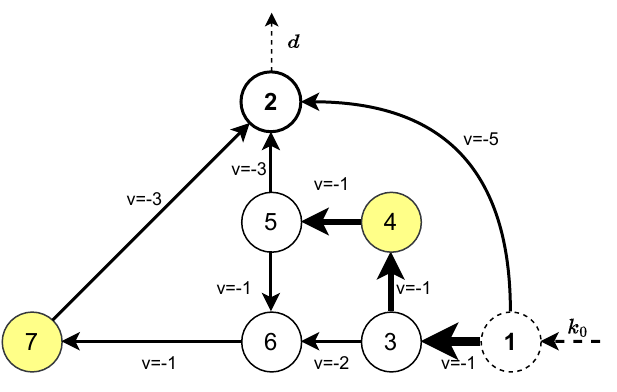}
         \caption{$\alpha=5$}\label{subfig:energy-5}
    \end{subfigure}
    \begin{subfigure}{0.45\linewidth}
         \centering
         \includegraphics[width=\textwidth]{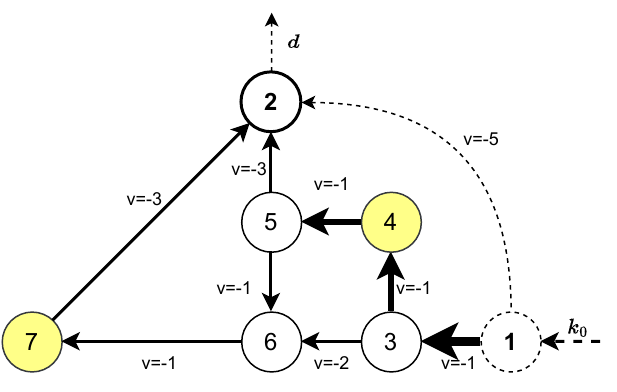}
         \caption{$\alpha=4$}\label{subfig:energy-4}
    \end{subfigure}
    \begin{subfigure}{0.45\linewidth}
         \centering
         \includegraphics[width=\textwidth]{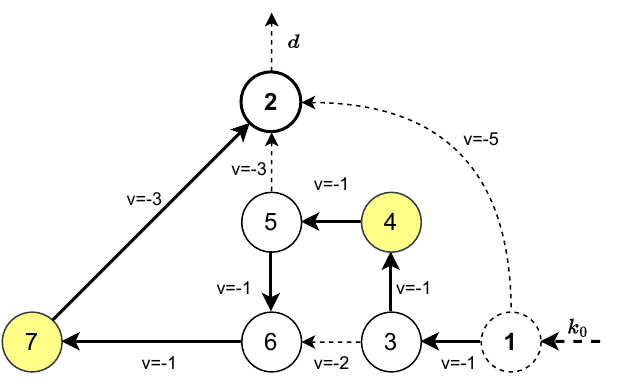}
         \caption{$\alpha=3$}\label{subfig:energy-3}
    \end{subfigure}
    \caption{Visualization of path choice probabilities under varying energy thresholds \( \alpha \).} 
    \label{fig:energy-constrained} 
\end{figure}





Different values of the constraint parameter \( \alpha \) lead to different sets of feasible paths. Intuitively, when \( \alpha \) is high—corresponding to a vehicle with greater energy capacity—more paths become feasible and are included in the route choice set. In contrast, when \( \alpha \) is low, the vehicle must visit charging stations more frequently to ensure it has sufficient energy to complete the trip. In the following, we illustrate how path choice probabilities vary with different values of \( \alpha \), specifically \( \alpha \in \{3, 4, 5\} \).

In Table~\ref{tab:energy-5}, we report the path choice probabilities computed using the constrained Recursive Logit (RL) model under different energy thresholds \( \alpha \). As observed, when \( \alpha = 5 \), all candidate paths satisfy the energy constraint and are therefore feasible, resulting in strictly positive choice probabilities for all paths. However, as the threshold \( \alpha \) decreases, some paths become infeasible due to excessive energy consumption without sufficient opportunity for recharging. Consequently, these paths receive zero probability under the constrained model. In particular, when \( \alpha = 3 \), only a single path—namely \([1,3,4,5,6,7,2]\)—remains feasible. This path includes both designated charging stations at Nodes 4 and 7, allowing the vehicle to replenish its energy and continue the trip under the  energy constraint.

For visualization purposes, Figure~\ref{fig:energy-constrained} illustrates the path choice probabilities under three different values of the energy constraint parameter \( \alpha \). The link utilities \( v(s' \mid s) \) are defined as \( -2 \times TT(s' \mid s) \), where \( TT(s' \mid s) \) is the travel time between nodes \( s \) and \( s' \). In the figure, dashed links represent transitions with zero probabilities under the constrained model, while the thickness of each solid link reflects the magnitude of its corresponding choice probability. For example, when \( \alpha = 5 \), all links are feasible and thus have positive probabilities. Among them, the link \( (1,3) \) has the highest probability of being chosen. Furthermore, at Node 3, the probability of transitioning to Node 4 (a charging station) is higher than the probability of transitioning to Node 6. In contrast, when \( \alpha = 3 \), several links—including \( (1,2) \), \( (3,6) \), and \( (5,2) \)—have zero probability of being selected, as they lead to infeasible paths that violate the energy constraint. This visualization highlights how stricter energy constraints restrict the set of viable routes and alter the structure of the route choice probabilities accordingly.

It is important to highlight that \([1,3,4,5,6,7,2]\) is also the longest path in terms of travel time. This outcome underscores a critical insight: when energy constraints are in place, travelers may be forced to take longer routes in order to ensure sufficient energy availability, particularly through visits to charging stations. In contrast, in the unconstrained setting, this longest path receives the lowest choice probability due to its higher travel cost. This comparison demonstrates that ignoring operational constraints—such as energy limits—can result in significantly biased or incorrect predictions of route choice behavior. Incorporating such constraints into the modeling framework is therefore essential for producing realistic and reliable results.

\subsection{Rechargeable vehicles in a grid network}

\begin{figure}[htb] 
    \centering
    \includegraphics[width=0.6\linewidth]{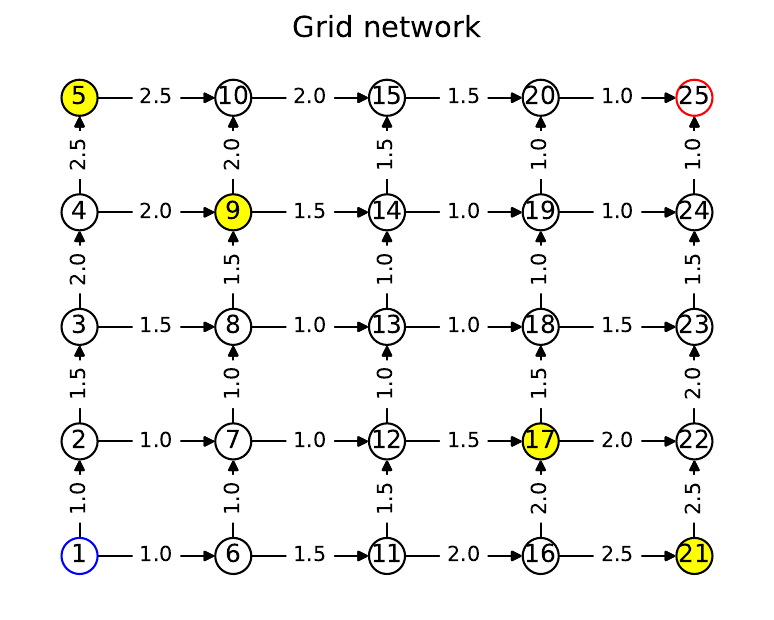}
    \caption{Grid graph with 25 nodes including 4 charging stations.} 
    \label{fig:grid} 
\end{figure}

We now present a larger illustrative example to highlight the impact of feasibility 
constraints in route choice modeling for rechargeable vehicles. The network, shown 
in Figure \ref{fig:grid}, is a $5 \times 5$ grid consisting of $25$ nodes, among which 
$4$ are designated as charging stations. A vehicle travels from node $1$ (bottom left) 
to node $25$ (top right). Each edge is associated with a travel time, and the 
instantaneous utility of traversing an edge is set to the negative of its travel time.  

For the CRL model, we assume that the energy consumed on each link is equal to its 
travel time. The vehicle has a finite energy capacity of $\alpha = 7$, meaning it 
must recharge at a station before the remaining energy is exhausted. This setting 
naturally enforces feasibility constraints: any route that exceeds the energy 
capacity without visiting a station is considered infeasible and is excluded from 
the consideration choice set.  

We compute link choice probabilities under both the RL and CRL models using the 
specified link utilities. The resulting choice probability distributions are 
visualized by varying the thickness of the links in 
\autoref{fig:grid-unconstrained} and \autoref{fig:grid-constrained}. In the RL model 
with an unrestricted choice set, links closer to the diagonal from node $1$ to 
node $25$ dominate, as they correspond to shorter travel times, whereas links 
farther away from the diagonal have lower probabilities.  

In contrast, the CRL model with the energy upper-bound $\alpha=7$ yields drastically 
different probability patterns. Routes near the diagonal, which appear optimal under 
the RL model, become infeasible due to the absence of charging stations in that 
region and therefore disappear from the choice set. Instead, feasible routes that 
include detours to charging stations, often located near the corners, retain 
non-negligible probability mass. This result demonstrates how the CRL model 
successfully captures feasibility constraints that fundamentally alter the set of 
available alternatives, leading to more realistic modeling of traveler behavior in 
energy-constrained settings.

\begin{figure}[htb]
    \centering
    \begin{subfigure}[t]{0.48\linewidth}
        \centering
        \includegraphics[width=\linewidth]{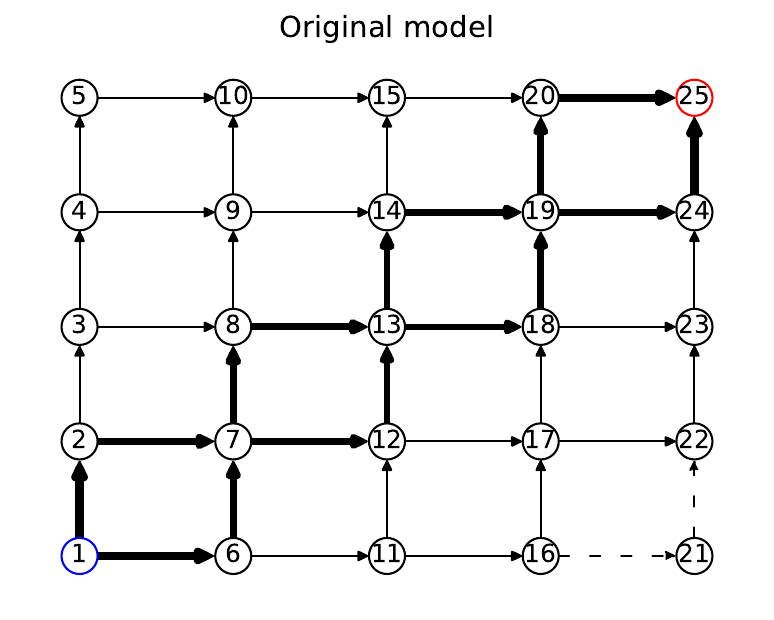}
        \caption{Path probabilities in grid graph with unrestricted choice set.}
        \label{fig:grid-unconstrained}
    \end{subfigure}
    \hfill
    \begin{subfigure}[t]{0.48\linewidth}
        \centering
        \includegraphics[width=\linewidth]{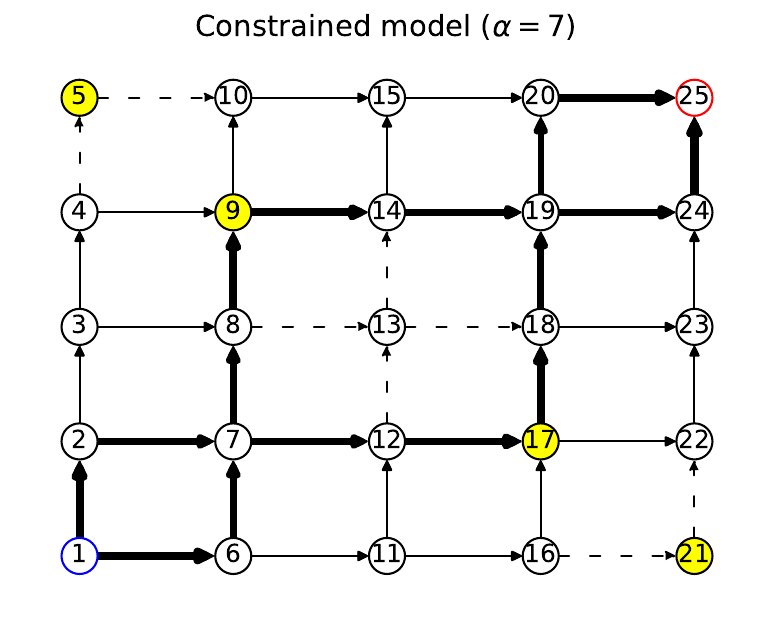}
        \caption{Path probabilities in grid graph with energy constraint of $\alpha=7$.}
        \label{fig:grid-constrained}
    \end{subfigure}
    \caption{Comparison of path probabilities in grid graph under (a) unrestricted choice set and (b) energy-constrained setting.}
    \label{fig:grid-comparison}
\end{figure}

\section{Model Properties}\label{sec:model-properties}
In this section we delve into the main properties of the CRL model described above. 

\subsection{Connection of Path-based MNL Models}
First, to better understand the modeling formulation of the CRL model, let us connect it with a logit-based path choice model. It is well-known that the standard RL model is equivalent to the MNL model over the universal choice set of paths. That is, let $\Omega$ be the set of all possible paths reaching the destination $d$. The path choice probability given by the RL model can be expressed as:
\begin{equation}
    P^{\RL}(\tau) = \frac{e^{\frac{1}{\mu} v(\tau)}}{\sum_{\tau' \in \Omega} e^{\frac{1}{\mu} v(\tau')}},
\end{equation}
where $v(\tau)$ is the accumulated utility of path $\tau$.  Similarly, the proposition below shows that, if the  cost $c(s'|s)$ are non-negative,  the CRL formulation gives path choice probabilities equivalent to an MNL model over a \textit{restricted universal choice set}:
\begin{proposition}\label{prop:equivalence-MNL}
    Assume that $c(s'|s)\geq 0$ for any $s,s'\in 
    \cS$. Let $\Omega^\alpha$ be a restricted universal choice set that contains only paths whose cost does not exceed $\alpha$, i.e., $\Omega^\alpha = \{\tau \in \Omega ~|~ C(\tau) \leq \alpha\}$. Then, the CRL model is equivalent to the MNL model over $\Omega^\alpha$:
    \begin{equation}
        P^{\CRL}(\tau) =\begin{cases}
            \frac{e^{\frac{1}{\mu} v(\tau)}}{\sum_{\tau' \in \Omega^\alpha} e^{\frac{1}{\mu} v(\tau')}} &\text{ if }\tau \in\Omega^\alpha\\
            0&\text{ otherwise }
        \end{cases} 
    \end{equation}
    where $v(\tau)$ is the accumulated utility along path $\tau$, i.e., for any path $\tau = \{s_0,s_1,\ldots,s_T\}$, $v(\tau) = \sum_{t=0}^{T-1} v(s_{t+1}|s_t)$.
\end{proposition}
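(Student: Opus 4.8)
The plan is to pass to the multiplicative (exponentiated) value function and then exploit a monotonicity consequence of the non-negativity assumption. Define $z(S_t) = e^{\frac{1}{\mu} V(S_t)}$ for every joint state $S_t$. The recursion in~\eqref{eq:V(S-t)} then becomes purely multiplicative: $z(S_t) = 1$ when $s_t = d$ and $C(S_t)\le\alpha$; $z(S_t)=0$ when $C(S_t)>\alpha$; and $z(S_t) = \sum_{s'\in N(s_t)} e^{\frac{1}{\mu}v(s'\mid s_t)}\, z(S_t\cup\{s'\})$ otherwise. The first observation is that, because $c(s'\mid s)\ge 0$, the accumulated cost $C(S_t)$ is non-decreasing along any path, so "the accumulated cost never exceeds $\alpha$ at any point" is equivalent to the single terminal condition $C(\tau)\le\alpha$. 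This is precisely what lets the restricted set be written simply as $\Omega^\alpha = \{\tau\in\Omega : C(\tau)\le\alpha\}$ rather than as a set cut out by an infinite family of prefix constraints, and it is where the non-negativity hypothesis is used.

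The core step is a lemma stating that unrolling the multiplicative recursion from the origin $S_0=\{k_0\}$ yields a sum over feasible paths, namely
\[
z(S_0) = \sum_{\tau\in\Omega^\alpha} e^{\frac{1}{\mu}v(\tau)}.
\]
I would obtain this by repeatedly substituting the third branch of the recursion into itself, peeling off one transition at a time: each expansion of a factor $z(S_t\cup\{s'\})$ contributes $e^{\frac{1}{\mu}v(s'\mid s_t)}$, a partial path reaching $d$ terminates with factor $1$ and total weight $e^{\frac{1}{\mu}v(\tau)}$, and any partial path whose accumulated cost crosses $\alpha$ hits the zero branch and contributes nothing. By the monotonicity observation, the surviving terms are exactly the paths of $\Omega^\alpha$. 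To make the expansion rigorous in the presence of cycles I would induct on the maximum number of remaining transitions, which is finite precisely because the budget $\alpha$ is finite and the costs are bounded away from zero in the regime of interest (the positive, discrete-cost setting later used for the uniqueness result); this simultaneously guarantees that every $z(S_t)$ is well defined and that the substitution terminates.

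With the lemma in hand, the path probability follows by telescoping. For any feasible $\sigma=\{s_0,\ldots,s_T\}\in\Omega^\alpha$, the denominator of $\pi(s_{t+1}\mid S_t)$ equals $z(S_t)$ while its numerator equals $e^{\frac{1}{\mu}v(s_{t+1}\mid s_t)}\,z(S_{t+1})$, so that $\pi(s_{t+1}\mid S_t) = e^{\frac{1}{\mu}v(s_{t+1}\mid s_t)}\,z(S_{t+1})/z(S_t)$. Taking the product over $t=0,\ldots,T-1$, the intermediate factors $z(S_1),\ldots,z(S_{T-1})$ cancel and leave $P(\sigma) = e^{\frac{1}{\mu}v(\sigma)}\, z(S_T)/z(S_0)$. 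Since $\sigma$ is feasible and $s_T=d$ we have $z(S_T)=1$, and substituting the lemma's value of $z(S_0)$ gives exactly the claimed MNL form; for infeasible $\sigma\notin\Omega^\alpha$ the preceding proposition already yields $P(\sigma)=0$, which completes the equivalence.

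The main obstacle I anticipate is making the unrolling in the lemma fully rigorous rather than merely formal: one must argue that the recursive substitution converges and equals the sum over $\Omega^\alpha$ even when the network is cyclic, which is where positivity together with discreteness of the costs is essential, since it both bounds the length of any feasible path and guarantees finiteness of the value functions. The telescoping argument and the reduction of the pointwise prefix constraint to the single terminal constraint $C(\tau)\le\alpha$ are comparatively routine.
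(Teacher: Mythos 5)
Your proposal is correct and follows essentially the same route as the paper's proof: both pass to the exponentiated value function $z(S_t)=e^{V(S_t)/\mu}$, use non-negativity of the costs to collapse the stepwise prefix constraints into the single terminal condition $C(\tau)\le\alpha$, unroll the multiplicative recursion to obtain $e^{V(S_0)/\mu}=\sum_{\tau\in\Omega^\alpha}e^{v(\tau)/\mu}$, and telescope the product of conditional choice probabilities. The only departure is your appeal to strictly positive, discrete costs to make the unrolling terminate rigorously, which is stronger than the stated hypothesis $c(s'|s)\ge 0$; the paper's own proof performs the same unrolling purely formally, so this is a refinement under extra assumptions rather than a genuinely different argument.
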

\begin{proof}
   For any path \( \tau = \{s_0, s_1, \ldots, s_T\} \), let \( S_t = \{s_0, \ldots, s_t\} \) denote the sequence of states from the origin \( s_0 \) up to state \( s_t \). The choice probability of path \( \tau \) under the CRL model can be computed as:
\begin{align}
P(\tau) &= \prod_{t=0}^{T-1} \pi(s_{t+1} \mid S_t) \nonumber \\
&= \prod_{t=0}^{T-1} \frac{\exp\left( \frac{1}{\mu}(v(s_{t+1} \mid s_t) + V(S_{t+1})) \right)}{\sum_{s \in N(s_t)} \exp\left( \frac{1}{\mu}(v(s \mid s_t) + V(S_t \cup \{s\})) \right)}\nonumber \\
&= \prod_{t=0}^{T-1} \frac{\exp\left( \frac{1}{\mu}(v(s_{t+1} \mid s_t) + V(S_{t+1})) \right)}{\exp\left( \frac{1}{\mu} V(S_t) \right)} \nonumber\\
&= \frac{\exp\left( \frac{1}{\mu} v(\tau) \right)}{\exp\left( \frac{1}{\mu} V(S_0) \right)},\label{eq:prob-V(S)}
\end{align}
where \( v(\tau) = \sum_{t=0}^{T-1} v(s_{t+1} \mid s_t) \) is the total utility along path \( \tau \), and \( V(S_0) \) is the value function at the origin.

The value function \( V(S_t) \) can be computed recursively using the following relation:
\begin{equation}\label{eq:recur-V(S)}
\exp\left( \frac{1}{\mu} V(S_t) \right) = 
\begin{cases}
1, & \text{if } s_t = d \text{ and } C(S_t) \leq \alpha, \\
\sum\limits_{s_{t+1} \in N(s_t)} e^ {\frac{1}{\mu} v(s_{t+1} \mid s_t) } \cdot e^{ \frac{1}{\mu} V(S_t \cup \{s_{t+1}\})} , & \text{if } s_t \neq d \text{ and } C(S_t) \leq \alpha, \\
0, & \text{otherwise},
\end{cases}
\end{equation}
for all \( t = 0, 1, \ldots \), where \( C(S_t) \) is the accumulated cost (e.g., energy consumed) along the sequence \( S_t \).

Moreover, assuming the cost function \( c(s' \mid s) \) is non-negative, it follows that for any two sequences \( S_h \subseteq S_k \), if \( C(S_h) > \alpha \), then \( C(S_k) > \alpha \) as well. Consequently, if \( V(S_h) = -\infty \), then \( V(S_k) = -\infty \). As a result, the recursion simplifies to:
\begin{equation}\label{eq:V(S)-recur}
    \exp\left( \frac{1}{\mu} V(S_t) \right) = \sum_{s_{t+1} \in N(s_t)} \exp\left( \frac{1}{\mu} v(s_{t+1} \mid s_t) \right) \cdot \exp\left( \frac{1}{\mu} V(S_t \cup \{s_{t+1}\}) \right),
\end{equation}
for all \( t \) such that \( s_t \neq d \) and \( C(S_t) \leq \alpha \).

This implies:
\[
\exp\left( \frac{1}{\mu} V(S_0) \right) = \sum_{\tau \in \Omega} \exp\left( \frac{1}{\mu} v(\tau) \right) \cdot \exp\left( \frac{1}{\mu} V(\tau) \right),
\]
where \( \Omega \) denotes the set of all possible paths from \( s_0 \) to the destination \( d \), and \( V(\tau) \) is the value associated with the terminal state sequence of path \( \tau \). Notably, \( \exp\left( \frac{1}{\mu} V(\tau) \right) = 0 \) if \( C(\tau) > \alpha \), and \( \exp\left( \frac{1}{\mu} V(\tau) \right) = 1 \) if \( C(\tau) \leq \alpha \). This leads to the final expression:
\[
\exp\left( \frac{1}{\mu} V(S_0) \right) = \sum_{\tau \in \Omega^\alpha} \exp\left( \frac{1}{\mu} v(\tau) \right),
\]
where \( \Omega^\alpha \subseteq \Omega \) is the set of all feasible paths satisfying the  constraint \( C(\tau) \leq \alpha \), as desired.
\end{proof}

We note that Proposition~\ref{prop:equivalence-MNL} no longer holds if the cost function is allowed to take both positive and negative values. The reason is that when \( c(s' \mid s) \) can be either positive or negative, the condition \( C(\tau) \leq \alpha \) is not sufficient to guarantee the feasibility of a path \( \tau \). In particular, even if the total accumulated cost along a path satisfies the constraint, the path may still be infeasible if, at any intermediate state \( s_t \), the accumulated cost \( C(S_t) \) exceeds \( \alpha \). This violates the per-step feasibility condition imposed by the constraint.

To extend Proposition~\ref{prop:equivalence-MNL} to the case where the cost function can take both positive and negative values, we refine the definition of the restricted choice set as follows:

\begin{definition}[Feasible Path Set under Stepwise Constraint]
Let \( \overline{\Omega}^\alpha \) denote the set of all paths from the origin to the destination such that, for every path \( \tau \in \overline{\Omega}^\alpha \), the accumulated cost at every intermediate step does not exceed the threshold \( \alpha \). Formally, we define:
\[
\overline{\Omega}^\alpha = \left\{ \tau = \{s_0, \ldots, s_T\} \, \middle| \, s_T = d,~ C(S_t) \leq \alpha \text{ for all } S_t = \{s_0, \ldots, s_t\},~ t = 0, \ldots, T \right\}.
\]
\end{definition}
Under this definition, we can show that the CRL model is also equivalent to a MNL model over the restricted choice set $\overline{\Omega}^\alpha$.
\begin{proposition}\label{prop:equi-MNL-2}
The path probabilities induced by the CRL model are equivalent to those given by a MNL model defined over the restricted feasible set \( \overline{\Omega}^\alpha \):
\begin{equation}
P^{\text{CRL}}(\tau) = 
\begin{cases}
\frac{\exp\left( \frac{1}{\mu} v(\tau) \right)}{\sum\limits_{\tau' \in \overline{\Omega}^\alpha} \exp\left( \frac{1}{\mu} v(\tau') \right)} & \text{if } \tau \in \overline{\Omega}^\alpha, \\
0 & \text{otherwise}.
\end{cases}
\end{equation}   
\end{proposition}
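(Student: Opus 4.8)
The plan is to follow the same two-stage argument as in the proof of Proposition~\ref{prop:equivalence-MNL}, but to replace the prefix-monotonicity step—which fails once costs may be negative—with a direct appeal to the per-step $-\infty$ truncation built into the value function~\eqref{eq:V(S-t)}. The first stage is identical and does not use the sign of the costs: for any path $\tau = \{s_0,\ldots,s_T\}$, the telescoping product of the MNL choice probabilities collapses exactly as in~\eqref{eq:prob-V(S)} to give
\[
P^{\CRL}(\tau) = \frac{\exp\!\left(\tfrac{1}{\mu} v(\tau)\right)}{\exp\!\left(\tfrac{1}{\mu} V(S_0)\right)},
\]
so the whole burden of the proof shifts to identifying the denominator $\exp(\tfrac{1}{\mu} V(S_0))$ with the normalizing sum $\sum_{\tau' \in \overline{\Omega}^\alpha} \exp(\tfrac{1}{\mu} v(\tau'))$.

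For the second stage I would unroll the recursion~\eqref{eq:recur-V(S)} starting from the (feasible) root $S_0$, but \emph{without} the simplification~\eqref{eq:V(S)-recur} that was available in the non-negative case. The key observation is that the value function assigns $\exp(\tfrac{1}{\mu} V(S_t)) = 0$ at every joint state $S_t$ with $C(S_t) > \alpha$, regardless of whether $s_t = d$. Hence, whenever the recursion descends into a state whose accumulated cost already violates the threshold, that state contributes a factor of zero and prunes the entire subtree of continuations rooted at it. Unrolling to termination therefore formally enumerates all paths from $s_0$ to $d$, but every path violating $C(S_t) \leq \alpha$ at some intermediate step $t$ is killed at the first such step and contributes nothing, while every stepwise-feasible path terminating at $d$ contributes a factor of $1$ at its terminal node. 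The surviving terms are precisely the paths in $\overline{\Omega}^\alpha$, which yields
\[
\exp\!\left(\tfrac{1}{\mu} V(S_0)\right) = \sum_{\tau' \in \overline{\Omega}^\alpha} \exp\!\left(\tfrac{1}{\mu} v(\tau')\right),
\]
and substituting this back into the first-stage identity completes the proof.

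The main subtlety—and the place where this genuinely departs from Proposition~\ref{prop:equivalence-MNL}—is making the unrolling rigorous when the set of feasible continuations is no longer \emph{upward closed} in the prefix order. In the non-negative case one first argues that $V(S_h) = -\infty$ propagates to all extensions $S_k \supseteq S_h$, which lets the sum be rewritten cleanly over $\Omega^\alpha$; here that monotone propagation fails. I would instead phrase the unrolling as an induction on the remaining path length (equivalently, define the truncated value over the tree of partial paths and argue that the $-\infty$ assignment at a violating state removes exactly that branch). The delicate point to confirm is that a path which temporarily exceeds $\alpha$ and would later dip back below it is nonetheless excluded: the truncation acts at the first violating state, which is precisely the behavior encoded by the stepwise definition of $\overline{\Omega}^\alpha$, so the claimed set identity holds as stated.
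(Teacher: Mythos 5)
Your proposal is correct and follows essentially the same route as the paper's proof: the telescoping identity of Equation~\ref{eq:prob-V(S)} followed by unrolling the value-function recursion, where the per-step zero truncation (which the paper packages into the weight function $\Psi(\tau)$) kills every path at its first constraint violation, so that the denominator $\exp\bigl(\tfrac{1}{\mu}V(S_0)\bigr)$ reduces to $\sum_{\tau' \in \overline{\Omega}^\alpha} \exp\bigl(\tfrac{1}{\mu} v(\tau')\bigr)$. Your explicit treatment of the loss of prefix-monotonicity---pruning at the \emph{first} violating state, made rigorous by induction over the tree of partial paths---is the same mechanism the paper encodes in $\Psi$, just spelled out more carefully.
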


\begin{proof}
We begin by leveraging the recursive formulation from Equation~\ref{eq:V(S)-recur}, which remains valid even when the cost function may take negative values:
\[
\exp\left( \frac{1}{\mu} V(S_t) \right) = \sum_{s_{t+1} \in N(s_t)} \exp\left( \frac{1}{\mu} v(s_{t+1} \mid s_t) \right) \cdot \exp\left( \frac{1}{\mu} V(S_t \cup \{s_{t+1}\}) \right),
\]
where \( V(S_t) \) denotes the value function at state \( S_t \), and \( N(s_t) \) denotes the set of next feasible nodes from \( s_t \).

Importantly, whenever the accumulated cost \( c(S_t) > \alpha \), we have:
$\exp\left( \frac{1}{\mu} V(S_t) \right) = 0.$
Therefore, the total value function at the origin state \( S_0 \) becomes:
\[
\exp\left( \frac{1}{\mu} V(S_0) \right) = \sum_{\tau \in \Omega} \exp\left( \frac{1}{\mu} v(\tau) \right) \cdot \Psi(\tau),
\]
where the weight function \( \Psi(\tau) \) is computed as:
\[
\Psi(\tau) =
\begin{cases}
0 & \text{if there exists a subsequence } S' \subseteq \tau \text{ such that } c(S') > \alpha, \\
\exp\left( \frac{1}{\mu} V(\tau) \right) & \text{otherwise}.
\end{cases}
\]
As a result, only paths in the restricted feasible set \( \overline{\Omega}^\alpha \) contribute to the sum, yielding:
\[
\exp\left( \frac{1}{\mu} V(S_0) \right) = \sum_{\tau \in \overline{\Omega}^\alpha} \exp\left( \frac{1}{\mu} v(\tau) \right).
\]
Finally, combining this with the probabilistic formulation in Equation~\ref{eq:prob-V(S)}, we obtain the desired result:
\[
P^{\text{CRL}}(\tau) = \frac{\exp\left( \frac{1}{\mu} v(\tau) \right)}{\sum\limits_{\tau' \in \overline{\Omega}^\alpha} \exp\left( \frac{1}{\mu} v(\tau') \right)}.
\]
\end{proof}

A direct implication of Proposition~\ref{prop:equi-MNL-2} is that, for any given path \( \tau \in \overline{\Omega}^\alpha \), the path choice probability assigned by the original RL model is less than or equal to that assigned by the CRL model. As a result, the in-sample log-likelihood under the CRL model is always greater than or equal to that under the standard RL model. We formally state this result in the following corollary.

\begin{corollary}
For any feasible path \( \tau \in \overline{\Omega}^\alpha \), we have
$P^{\text{RL}}(\tau) \leq P^{\text{CRL}}(\tau).$
Consequently, given a dataset of observed trajectories \( \mathcal{D} = \{\tau_1, \ldots, \tau_n\} \), if all observed paths satisfy the constraint (i.e., \( \tau_i \in \overline{\Omega}^\alpha \) for all \( i \)), then the CRL model yields a log-likelihood value that is greater than or equal to that of the RL model:
\[
\mathcal{L}^{\text{CRL}}(\mathcal{D}) \geq \mathcal{L}^{\text{RL}}(\mathcal{D}),
\]
where the log-likelihood functions are defined as
\[
\mathcal{L}^{\text{RL}}(\mathcal{D}) = \sum_{\tau \in \mathcal{D}} \ln P^{\text{RL}}(\tau), \quad 
\mathcal{L}^{\text{CRL}}(\mathcal{D}) = \sum_{\tau \in \mathcal{D}} \ln P^{\text{CRL}}(\tau).
\]
\end{corollary}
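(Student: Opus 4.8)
The plan is to derive the pointwise inequality $P^{\text{RL}}(\tau) \le P^{\text{CRL}}(\tau)$ directly from the two path-based MNL representations already established, and then propagate it through the log-likelihood using monotonicity of the logarithm. First I would invoke the stated RL--MNL equivalence over the universal set $\Omega$ together with Proposition~\ref{prop:equi-MNL-2} to write, for a fixed feasible path $\tau \in \overline{\Omega}^\alpha$,
\[
P^{\text{RL}}(\tau) = \frac{e^{\frac{1}{\mu} v(\tau)}}{\sum_{\tau' \in \Omega} e^{\frac{1}{\mu} v(\tau')}}, \qquad P^{\text{CRL}}(\tau) = \frac{e^{\frac{1}{\mu} v(\tau)}}{\sum_{\tau' \in \overline{\Omega}^\alpha} e^{\frac{1}{\mu} v(\tau')}}.
\]
The key observation is that the two expressions share the identical numerator $e^{\frac{1}{\mu} v(\tau)}$, so the comparison reduces entirely to comparing the two normalizing denominators.

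The next step is to exploit the set inclusion $\overline{\Omega}^\alpha \subseteq \Omega$, which holds by definition since every stepwise-feasible path is in particular a path reaching $d$. Because each summand $e^{\frac{1}{\mu} v(\tau')}$ is strictly positive, restricting the sum to the smaller index set can only decrease (or leave unchanged) the total, giving $\sum_{\tau' \in \overline{\Omega}^\alpha} e^{\frac{1}{\mu} v(\tau')} \le \sum_{\tau' \in \Omega} e^{\frac{1}{\mu} v(\tau')}$. A smaller positive denominator against a common positive numerator yields a larger quotient, so $P^{\text{RL}}(\tau) \le P^{\text{CRL}}(\tau)$, establishing the first claim.

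For the log-likelihood statement, I would apply monotonicity of $\ln(\cdot)$ to the pointwise inequality, obtaining $\ln P^{\text{RL}}(\tau_i) \le \ln P^{\text{CRL}}(\tau_i)$ for each observed path (all of which lie in $\overline{\Omega}^\alpha$ by the hypothesis of the corollary), and then sum over $i = 1, \ldots, n$ to conclude $\mathcal{L}^{\text{RL}}(\mathcal{D}) \le \mathcal{L}^{\text{CRL}}(\mathcal{D})$.

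The argument is essentially a one-line denominator comparison, so there is no deep obstacle; the single point demanding care is the \emph{well-posedness} of the denominators. In cyclic networks $\Omega$ is infinite, and the RL normalizing sum $\sum_{\tau' \in \Omega} e^{\frac{1}{\mu} v(\tau')} = e^{\frac{1}{\mu} V(s_0)}$ converges only when the RL value function at the origin is finite. I would therefore state the result under the standing assumption that the RL model is well-defined (finite $V(s_0)$); the CRL sum, being taken over the subset $\overline{\Omega}^\alpha$, then converges automatically, and the inequality holds without further qualification.
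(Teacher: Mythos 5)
Your proof is correct and follows essentially the same route as the paper's: both arguments reduce the claim to the set inclusion $\overline{\Omega}^\alpha \subseteq \Omega$, compare the two MNL normalizing denominators with a common numerator, and then push the pointwise inequality through $\ln(\cdot)$ and the sum over observations. Your closing remark on well-posedness (finiteness of $\sum_{\tau' \in \Omega} e^{\frac{1}{\mu}v(\tau')}$ in cyclic networks) is a sensible refinement that the paper's proof leaves implicit.
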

The corollary follows directly from the fact that the restricted choice set \( \overline{\Omega}^\alpha \) is a subset of the universal choice set \( \Omega \), which immediately implies:
\[
\sum_{\tau \in \overline{\Omega}^\alpha} \exp\left( \frac{1}{\mu} v(\tau) \right) \leq \sum_{\tau \in \Omega} \exp\left( \frac{1}{\mu} v(\tau) \right).
\]
This inequality directly leads to the result \( P^{\text{RL}}(\tau) \leq P^{\text{CRL}}(\tau) \) for any feasible path \( \tau \in \overline{\Omega}^\alpha \).  Equality holds, i.e., \( P^{\text{RL}}(\tau) = P^{\text{CRL}}(\tau) \) for all \( \tau \), if and only if the restricted and universal choice sets are identical, that is, \( \overline{\Omega}^\alpha = \Omega \). In this case, all paths are feasible with respect to the constraint, and the CRL model reduces to the standard RL model. In fact, we can see that the inequality above is strict if the universal choice set 
$\Omega$ strictly contains the feasible set $\overline{\Omega}^{\alpha}$, i.e.,
$\mathcal{L}^{\text{CRL}}(\mathcal{D}) > \mathcal{L}^{\text{RL}}(\mathcal{D}),$
whenever there exists a path $\tau \in \Omega$ that is not feasible with respect to the constraint 
($\tau \notin \overline{\Omega}^{\alpha}$).

\section{Markovian Choice Probabilities through Extended MDP}\label{sec:extended-state-space}

As discussed above, the CRL model is inherently non-Markovian, since the choice 
probabilities at a given state (i.e., a node in the network) depend not only on the 
current state but also on the sequence of historical states visited. This violation of 
the Markov property renders the model impractical for estimation and prediction.
To address this issue, in the following section we propose an approach based on 
\emph{state-space expansion}. The key idea is to augment the state representation 
with the information regarding the accumulated cost, thereby transforming the CRL model 
into an equivalent Markovian model. Once the Markov property is restored, the 
problem can be reformulated as a standard RL model, enabling 
efficient estimation through classical methods such as value iteration.
We begin by presenting our approach to transform the model into a Markovian form, 
followed by a discussion of the associated estimation methods.

\subsection{Markovian Choice Probabilities}
To achieve Markovian choice probabilities, we leverage the observation that the 
expected utility at a state $s_t$ depends only on the accumulated cost at that state, 
rather than on the entire history of previously visited states. This motivates us to 
incorporate the accumulated cost into the state representation, thereby ensuring 
that the value function is independent of the historical trajectory. 

Formally, we define an \emph{extended state space} $\widetilde{\mathcal{S}}$, where 
each extended state $\widetilde{s} \in \widetilde{\mathcal{S}}$ is a pair 
$\widetilde{s} = (s,z),$
with $s \in \mathcal{S}$ denoting the original state and $z \in \mathbb{R}$ the 
accumulated cost up to that state. Similarly, we define the set of feasible next 
states in the extended space as
\[
\widetilde{N}(\widetilde{s}) 
= \bigl\{ (s',z') \;\big|\; s' \in N(s), \; z' = z + c(s' \mid s) \bigr\},
\]
that is, each successor in the extended state space consists of a next state $s'$ 
from the original state space together with the updated accumulated cost $z' = z+c(s'|s)$. 

We now formulate a RL model on this extended state space. 
At each extended state $\widetilde{s}_t = (s_t,z_t)$, the decision-maker selects 
the next state $\widetilde{s}_{t+1} = (s_{t+1},z_{t+1})$ by maximizing the expected utility
\[
\mathbb{E}_{\epsilon} \left[ \max_{\widetilde{s}_{t+1} \in \widetilde{N}(\widetilde{s}_t)} 
\Bigl\{ \widetilde{v}(\widetilde{s}_{t+1} \mid \widetilde{s}_t) 
+ \widetilde{V}(\widetilde{s}_{t+1}) 
+ \mu \, \epsilon(\widetilde{s}_{t+1}) \Bigr\} \right],
\]
where $\widetilde{v}(\widetilde{s}_{t+1} \mid \widetilde{s}_t)$ denotes the deterministic 
immediate utility of moving from $\widetilde{s}_t$ to $\widetilde{s}_{t+1}$, given by
\[
\widetilde{v}(\widetilde{s}_{t+1} \mid \widetilde{s}_t) = v(s_{t+1} \mid s_t),
\]
$\widetilde{V}(\widetilde{s}_{t+1})$ is the expected utility value from the extended state 
$\widetilde{s}_{t+1}$ to the destination, and $\epsilon(\widetilde{s}_{t+1})$ are 
i.i.d.\ Gumbel-distributed (Type I extreme value) random components.  

The value function $\widetilde{V}(\widetilde{s}_t)$ associated with state  $\widetilde{s}_t = (s_t,z_t)$ is defined recursively as
\begin{equation}\label{eq:extended-V-recur}
  \widetilde{V}(\widetilde{s}_t) =
\begin{cases}
    0, & \text{if } s_t = d \text{ and } z_t \leq \alpha, \\[6pt]
    -\infty, & \text{if } z_t > \alpha, \\[6pt]
    \mu \ln \left( 
        \sum_{\widetilde{s}_{t+1} \in \widetilde{N}(\widetilde{s}_t)} 
        \exp\!\left( \tfrac{1}{\mu} \bigl[ 
            \widetilde{v}(\widetilde{s}_{t+1} \mid \widetilde{s}_t) 
            + \widetilde{V}(\widetilde{s}_{t+1}) 
        \bigr] \right) 
    \right), & \text{otherwise}.
\end{cases}  
\end{equation}
In other words, the value function is assigned $-\infty$ whenever the accumulated 
cost at the current state exceeds the upper bound $\alpha$. In all other cases, it 
takes the standard log-sum-exp form of the RL 
model. This construction ensures that any path whose cumulative cost violates 
the constraint is rendered infeasible and, consequently, is assigned zero probability 
of being chosen in the decision process. In effect, the extended state-space 
formulation enforces the feasibility constraints directly within the recursive 
structure of the value function, thereby integrating the constraint into the 
Markovian dynamics of the model.

The choice probability at each state is given by the standard MNL model:
\begin{equation}
    \widetilde{\pi} \bigl(\widetilde{s}_{t+1}\mid \widetilde{s}_t\bigr) 
    = 
    \begin{cases}
        \dfrac{\exp\!\left( \tfrac{1}{\mu} \left[ 
            \widetilde{v}(\widetilde{s}_{t+1} \mid \widetilde{s}_t) 
            + \widetilde{V}(\widetilde{s}_{t+1})
        \right] \right)}
        {\sum\limits_{\widetilde{s} \in \widetilde{N}(\widetilde{s}_t)} 
        \exp\!\left( \tfrac{1}{\mu} \left[ 
            \widetilde{v}(\widetilde{s} \mid \widetilde{s}_t) 
            + \widetilde{V}(\widetilde{s})
        \right] \right)}, 
        & \text{if } \widetilde{s}_{t+1} \in \widetilde{N}(\widetilde{s}_t), \\[12pt]
        0, & \text{otherwise}.
    \end{cases}
\end{equation}
For notational convenience, we denote the recursive logit model defined on the extended 
state space by \textbf{ERL}. We are now ready to state the main result, which establishes 
the equivalence between the ERL and the CRL model.

\begin{theorem}\label{th:ERL-equipvalence}
The ERL model is equivalent to the CRL model described in Section~\ref{sec:CRL}, in the sense that for 
any path $\tau$ in the original state space $\mathcal{S}$, there exists a one-to-one mapping 
to a path $\widetilde{\tau}$ in the extended state space such that their choice probabilities coincide:
\[
P^{\CRL}(\tau) \;=\; P^{\ERL}(\widetilde{\tau}).
\]
\end{theorem}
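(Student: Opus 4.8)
The plan is to exhibit an explicit bijection between paths in $\cS$ and paths in the extended space, and then to show that the two value functions agree under this correspondence, from which equality of the choice probabilities---and hence of the path probabilities---follows at once. Given a path $\tau=\{s_0,\ldots,s_T\}$ in the original state space, I would map it to $\widetilde\tau=\{\widetilde s_0,\ldots,\widetilde s_T\}$ with $\widetilde s_t=(s_t,z_t)$, where $z_0=0$ and $z_{t+1}=z_t+c(s_{t+1}\mid s_t)$. By construction $z_t=C(S_t)$ for every $t$, so the $z$-coordinate is completely determined by the node sequence; conversely, dropping the $z$-coordinate recovers $\tau$. Hence $\tau\mapsto\widetilde\tau$ is a one-to-one correspondence between paths in $\cS$ and those extended paths emanating from $(s_0,0)$.

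The core of the argument is the identity
\[
V(S_t)=\widetilde V\bigl(s_t,\,C(S_t)\bigr)
\]
for every reachable joint state $S_t$; that is, the CRL value function depends on the history only through the current node and the accumulated cost. To establish this I would compare the defining recursions \eqref{eq:V(S-t)} and \eqref{eq:extended-V-recur} term by term. The successor $S_{t+1}=S_t\cup\{s_{t+1}\}$ in the CRL recursion carries accumulated cost $C(S_{t+1})=C(S_t)+c(s_{t+1}\mid s_t)$, which is exactly the $z$-coordinate of the extended successor $(s_{t+1},\,z_t+c(s_{t+1}\mid s_t))\in\widetilde N(\widetilde s_t)$; since $\widetilde v(\widetilde s_{t+1}\mid\widetilde s_t)=v(s_{t+1}\mid s_t)$, the two log-sum-exp expressions are sums over index sets in bijection with matching summands. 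The boundary cases also coincide under $z_t=C(S_t)$: both recursions return $0$ when $s_t=d$ and the accumulated cost is within $\alpha$, and both return $-\infty$ once it exceeds $\alpha$. Thus the function $(s,z)\mapsto V(S_t)$, which is well defined precisely because the right-hand side depends on $S_t$ only through $(s_t,C(S_t))$, satisfies the ERL recursion and therefore equals $\widetilde V$.

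With the value-function identity in hand, I would substitute it into the two MNL choice formulas. For each transition, the numerator $\exp(\tfrac{1}{\mu}[v(s_{t+1}\mid s_t)+V(S_{t+1})])$ equals $\exp(\tfrac{1}{\mu}[\widetilde v(\widetilde s_{t+1}\mid\widetilde s_t)+\widetilde V(\widetilde s_{t+1})])$, and the denominators agree because they sum matching terms over $N(s_t)$ and $\widetilde N(\widetilde s_t)$ respectively. Hence $\pi(s_{t+1}\mid S_t)=\widetilde\pi(\widetilde s_{t+1}\mid\widetilde s_t)$ for every $t$, and taking the product over $t=0,\ldots,T-1$ yields $P^{\CRL}(\tau)=P^{\ERL}(\widetilde\tau)$.

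The main obstacle is justifying the value-function identity rigorously, since the network may be cyclic and the recursions are therefore not posed over a finite, acyclically ordered state space, so that a naive finite backward induction does not immediately apply. The substance of the claim is exactly that the a priori history-dependent quantity $V(S_t)$ collapses to a function of $(s_t,C(S_t))$ alone, and that both fixed-point systems are well posed. I would close this gap either by invoking the uniqueness of the solution to the extended value-function equation established in the next section under positive discrete costs---which also forces the accumulated cost to increase monotonically and stay bounded by $\alpha$, rendering the effective extended state space finite so that backward induction terminates---or, in the general signed-cost case, by appealing to Proposition~\ref{prop:equi-MNL-2}, which already characterizes $P^{\CRL}$ as an MNL over $\overline{\Omega}^{\alpha}$, and verifying that the ERL model induces the same MNL over the image paths.
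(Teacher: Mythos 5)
Your proof is correct, and it takes a genuinely different route from the paper's. The paper argues at the \emph{path} level: it invokes Proposition~\ref{prop:equi-MNL-2} to write $P^{\CRL}$ as an MNL over the restricted set $\overline{\Omega}^{\alpha}$, then unrolls the ERL recursion for $\exp\bigl(\tfrac{1}{\mu}\widetilde{V}(\widetilde{s}_0)\bigr)$ into a sum over feasible extended paths, showing $P^{\ERL}$ is the same MNL over the image of $\overline{\Omega}^{\alpha}$ under the bijection, and concludes by matching path utilities. You instead argue at the \emph{transition} level: your key lemma is the value-function identity $V(S_t)=\widetilde{V}\bigl(s_t,C(S_t)\bigr)$---the statement that the a priori history-dependent CRL value collapses to a function of the current node and accumulated cost---from which equality of the conditional choice probabilities $\pi(s_{t+1}\mid S_t)=\widetilde{\pi}(\widetilde{s}_{t+1}\mid\widetilde{s}_t)$ follows, and the path-probability identity is then just a product over transitions. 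Your route buys a slightly stronger conclusion (the link-level policies coincide, not merely the path probabilities) and makes explicit the Markovian-collapse idea that motivates the state-space extension in the first place; its cost is that you must justify well-posedness of two fixed-point systems on a possibly cyclic network, which you flag honestly. Both of your proposed fixes are sound: under positive discrete costs the accumulated cost strictly increases, so the effective extended state space is finite and acyclic and backward induction terminates (this is exactly the mechanism behind Proposition~\ref{prop:positive-cost}), while your general-cost fallback---characterizing both models as the MNL over $\overline{\Omega}^{\alpha}$ via Proposition~\ref{prop:equi-MNL-2}---is precisely the paper's own argument. Note that the paper's unrolling of recursions into sums over possibly infinite path sets carries the same well-posedness caveat but leaves it implicit, so your treatment is, if anything, more careful on this point.
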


\begin{proof}
The proof proceeds by showing that both probability formulations can be expressed as 
path-based probabilities over a restricted universal choice set.  As established in Proposition~\ref{prop:equi-MNL-2}, the CRL model assigns the following probability to a path $\tau$:
\begin{equation}
P^{\CRL}(\tau) = 
\begin{cases}
\dfrac{\exp\!\left( \tfrac{1}{\mu} v(\tau) \right)}
{\sum\limits_{\tau' \in \overline{\Omega}^\alpha} 
\exp\!\left( \tfrac{1}{\mu} v(\tau') \right)}, & \text{if } \tau \in \overline{\Omega}^\alpha, \\[12pt]
0, & \text{otherwise},
\end{cases}
\end{equation}
where $v(\tau)$ denotes the total utility along path $\tau$, $\overline{\Omega}^\alpha$ is the restricted 
universal choice set under the cost constraint.  

Now consider any path $\tau = [s_0,\ldots,s_T]$ in the original state space. This path uniquely 
corresponds to a path in the extended state space,
\[
\widetilde{\tau} = [\widetilde{s}_0,\ldots,\widetilde{s}_T] 
\;\stackrel{\text{def}}{=}\; 
[(s_0,z_0),\ldots,(s_T,z_T)],
\]
where $z_0 = 0$ and, for $t=1,\ldots,T$, the accumulated cost is defined recursively as
\[
z_t = \sum_{k=0}^{t-1} c(s_{k+1}\mid s_k).
\]

For notational clarity, let $\mathcal{C}(\widetilde{s})$ denote the accumulated cost associated 
with an extended state $\widetilde{s}$, i.e., $\mathcal{C}(\widetilde{s}_t) = z_t$.  
The probability of $\widetilde{\tau}$ under the ERL model is given by the recursive logit formulation:
\[
P^{\ERL}(\widetilde{\tau}) = 
\frac{\exp\!\left(\tfrac{1}{\mu}\sum_{t=0}^{T-1}
\widetilde{v}(\widetilde{s}_{t+1}\mid \widetilde{s}_t)\right)}
{\exp\!\left(\tfrac{1}{\mu}\widetilde{V}(\widetilde{s}_0)\right)}.
\]

From the recursive definition of the extended value function, we have
\[
\exp\!\left(\tfrac{1}{\mu}\widetilde{V}(\widetilde{s}_t)\right) 
= \sum_{\widetilde{s}_{t+1}\in \widetilde{N}(\widetilde{s}_t)} 
\exp\!\left( \tfrac{1}{\mu}\widetilde{v}(\widetilde{s}_{t+1}\mid \widetilde{s}_t)\right) 
\exp\!\left(\tfrac{1}{\mu}\widetilde{V}(\widetilde{s}_{t+1})\right),
\]
whenever $\mathcal{C}(\widetilde{s}_t)\leq \alpha$, and 
$\exp(\tfrac{1}{\mu}\widetilde{V}(\widetilde{s}_t))=0$ otherwise.  
Iterating this recursion yields
\[
\exp\!\left(\tfrac{1}{\mu}\widetilde{V}(\widetilde{s}_0)\right) 
= \sum_{\widetilde{\tau}\in \widetilde{\Omega}} 
\exp\!\left(\tfrac{1}{\mu}\widetilde{v}(\widetilde{\tau})\right)\Psi(\widetilde{\tau}),
\]
where $\widetilde{v}(\widetilde{\tau}) = \sum_{(s,s')\in \tau}\widetilde{v}(s'\mid s)$ and 
$\Psi(\widetilde{\tau})$ is an indicator function, equal to $0$ if there exists any 
$\widetilde{s}\in \widetilde{\tau}$ with $\mathcal{C}(\widetilde{s}) > \alpha$, and $1$ otherwise.  
Accordingly, we may equivalently restrict the summation to the feasible path set:
\[
\widetilde{\Omega}^\alpha = \left\{ \widetilde{\tau}\in \widetilde{\Omega} \;\middle|\; 
\mathcal{C}(\widetilde{s}) \leq \alpha, \;\; \forall \widetilde{s}\in \widetilde{\tau} \right\},
\]
so that
\[
\exp\!\left(\tfrac{1}{\mu}\widetilde{V}(\widetilde{s}_0)\right) 
= \sum_{\widetilde{\tau}\in \widetilde{\Omega}^\alpha} 
\exp\!\left(\tfrac{1}{\mu}\widetilde{v}(\widetilde{\tau})\right).
\]
Therefore, the ERL path probability takes the form
\[
P^{\ERL}(\widetilde{\tau}) 
= \frac{\exp\!\left(\tfrac{1}{\mu}\sum_{t=0}^{T-1}\widetilde{v}(\widetilde{s}_{t+1}\mid \widetilde{s}_t)\right)}
{\sum\limits_{\widetilde{\tau}\in \widetilde{\Omega}^\alpha} 
\exp\!\left(\tfrac{1}{\mu}\widetilde{v}(\widetilde{\tau})\right)}.
\]
Finally, note that for the corresponding original path $\tau = [s_0,\ldots,s_T]$, 
we have
\[
\sum_{t=0}^{T-1}\widetilde{v}(\widetilde{s}_{t+1}\mid \widetilde{s}_t)
= \sum_{t=0}^{T-1} v(s_{t+1}\mid s_t).
\]
Thus, the two path probabilities coincide:
\[
P^{\ERL}(\widetilde{\tau}) \;=\; P^{\CRL}(\tau),
\]
establishing the desired equivalence.
\end{proof}

Theorem~\ref{th:ERL-equipvalence} establishes the equivalence between the 
non-Markovian CRL model and the Markovian ERL model. 
This equivalence is crucial, as it implies that estimation can be carried out on the 
extended state space using standard recursive logit estimation techniques, such as 
the NFXP algorithm \citep{Rust87}.  In the following, we provide a detailed discussion of the estimation procedure 
for the ERL model. In particular, we highlight an important property: under certain 
conditions, estimation based on the ERL formulation exhibits greater numerical 
stability compared to estimation of an unconstrained RL model.  
This stability arises because the ERL representation naturally enforces feasibility 
constraints through the extended state space, thereby eliminating infeasible paths 
and reducing irregularities in the likelihood function.

\subsection{Estimation Methods}
Since the ERL model takes the standard form of the RL model, 
its estimation can be carried out using the NFXP algorithm. 
In this approach, value iteration is employed to compute both the value function 
$\widetilde{V}(\widetilde{s})$ and the corresponding likelihood function.  
 To make estimation computationally feasible, we impose an additional assumption 
on the cost structure. Specifically, we assume that the cost $c(s' \mid s)$ takes 
values in a discrete set. This assumption guarantees that the extended state space 
$\widetilde{\mathcal{S}}$ is finite and discrete, thereby enabling practical 
implementation of the NFXP algorithm on the ERL model.

The recursive definition of $\widetilde{V}(\widetilde{s})$ in \eqref{eq:extended-V-recur} allows us to define 
\[
\widetilde{Z}(\widetilde{s}) \;=\; \exp\!\left(\tfrac{1}{\mu}\widetilde{V}(\widetilde{s})\right),
\]
and rewrite the Bellman recursion in terms of $\widetilde{Z}(\cdot)$ as
\[
\widetilde{Z}(\widetilde{s}) =
\begin{cases}
    1, & \text{if } \mathcal{N}(\widetilde{s}) = d \text{ and } \mathcal{C}(\widetilde{s}) \leq \alpha, \\[6pt]
    0, & \text{if } \mathcal{C}(\widetilde{s}) > \alpha, \\[6pt] 
        \sum\limits_{\widetilde{s}' \in \widetilde{N}(\widetilde{s})} 
        \exp\!\left( \tfrac{1}{\mu} 
            \widetilde{v}(\widetilde{s}' \mid \widetilde{s}) 
        \right)\cdot \widetilde{Z}(\widetilde{s}'), & \text{otherwise},
\end{cases}
\]
where $\mathcal{N}(\widetilde{s})$ denotes the original state $s \in \mathcal{S}$ corresponding 
to the extended state $\widetilde{s}$.  

Equivalently, we may encode this system in matrix form. First, for any state 
$\widetilde{s}$ such that $\mathcal{C}(\widetilde{s})>\alpha$, we set 
$\widetilde{N}(\widetilde{s})=\emptyset$, effectively removing infeasible states from 
the recursion. Then we  define a vector $B \in \mathbb{R}^{|\widetilde{\mathcal{S}}|}$ with entries 
  \[
  B(\widetilde{s}) = \begin{cases}
  1, & \text{if } \mathcal{N}(\widetilde{s})=d,\\
  0, & \text{otherwise},
  \end{cases}
  \]
and a nonnegative matrix $M \in \mathbb{R}^{|\widetilde{\mathcal{S}}|\times|\widetilde{\mathcal{S}}|}$ with elements
  \[
  M(\widetilde{s},\widetilde{s}') 
  = \begin{cases}
  \exp\!\left(\tfrac{1}{\mu}\widetilde{v}(\widetilde{s}' \mid \widetilde{s})\right), 
  & \text{if } \widetilde{s}'\in \widetilde{N}(\widetilde{s}), \\[6pt]
  0, & \text{otherwise}.
  \end{cases}
  \]
With these definitions, the Bellman recursion can be written compactly as
\[
   \widetilde{Z} \;=\; M \widetilde{Z} + B,
\]
which is a linear system of equations. This form can be solved efficiently via value 
iteration or directly through matrix inversion. Importantly, the system admits a unique 
solution under some certain conditions, as stated in the following proposition.  
\begin{proposition}\label{prop:positive-cost}
Assume that the cost satisfies $c(s,s') > 0$ for all $s,s' \in \mathcal{S}$. 
Then the matrix $(I - M)$ is invertible, where $I$ denotes the identity matrix 
(i.e., a diagonal matrix with ones on the main diagonal and zeros elsewhere). 
Consequently, the linear system
$\widetilde{Z} = M \widetilde{Z} + B$
admits a unique solution, which can be written explicitly as
$\widetilde{Z} = (I - M)^{-1} B.$
\end{proposition}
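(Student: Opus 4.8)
The plan is to exploit the fact that strictly positive transition costs force the accumulated-cost coordinate to increase along every admissible transition, which rules out cycles in the extended state space and makes $M$ nilpotent; invertibility of $I-M$ then follows from elementary linear algebra. Concretely, recall that $M(\widetilde{s},\widetilde{s}')$ is nonzero only when $\widetilde{s}'=(s',z+c(s'\mid s))$ for some $s'\in N(s)$, so every edge carried by $M$ sends a state $\widetilde{s}=(s,z)$ to a successor whose cost coordinate equals $z+c(s'\mid s)$. Under the assumption $c(s'\mid s)>0$, this means $\mathcal{C}(\widetilde{s}')=\mathcal{C}(\widetilde{s})+c(s'\mid s)>\mathcal{C}(\widetilde{s})$ for every transition with $M(\widetilde{s},\widetilde{s}')\neq 0$.

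First I would record that the extended state space is finite: by the discreteness assumption on the costs the reachable accumulated costs lie in a discrete set, and since infeasible states with $z>\alpha$ are pruned (so that $\widetilde{N}(\widetilde{s})=\emptyset$ there), only finitely many cost levels remain; combined with finiteness of $\mathcal{S}$ this gives $|\widetilde{\mathcal{S}}|<\infty$. Next, the strict monotonicity of $\mathcal{C}$ along transitions implies that the directed graph induced by the support of $M$ is acyclic: any directed cycle would return to a state with the same cost coordinate, contradicting the strict increase. Consequently, ordering the extended states by increasing value of $\mathcal{C}(\cdot)$ turns $M$ into a strictly triangular matrix (zero diagonal, with nonzero entries only from lower to strictly higher cost levels), so $M$ is nilpotent, i.e.\ $M^{k}=0$ for some $k\le|\widetilde{\mathcal{S}}|$.

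From nilpotency the conclusion is immediate. Since all eigenvalues of a nilpotent matrix vanish, every eigenvalue of $I-M$ equals $1$, hence $\det(I-M)=1\neq 0$ and $I-M$ is invertible; equivalently, the Neumann series $\sum_{k\ge 0}M^{k}$ terminates after finitely many terms and provides the explicit inverse $(I-M)^{-1}=\sum_{k=0}^{K}M^{k}$. Multiplying $\widetilde{Z}=M\widetilde{Z}+B$ by $(I-M)^{-1}$ then yields the unique solution $\widetilde{Z}=(I-M)^{-1}B$. I expect the only genuine content to lie in the acyclicity argument---turning ``positive costs'' into ``no directed cycles'' and hence nilpotency; once that is in hand the invertibility and uniqueness are routine. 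A point to handle carefully is that positivity alone guarantees acyclicity, but finiteness of $\widetilde{\mathcal{S}}$ (needed for nilpotency and for the terminating Neumann series) is what the separate discreteness assumption supplies.
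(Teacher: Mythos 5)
Your proof is correct and follows essentially the same route as the paper's: strictly positive costs force the accumulated-cost coordinate to increase along every transition, ruling out cycles in the support of $M$, so that after ordering states by accumulated cost $M$ becomes strictly triangular and $I-M$ is invertible via the Neumann series. Your version is in fact slightly sharper on two points the paper glosses over---you note explicitly that finiteness of $\widetilde{\mathcal{S}}$ (from the discreteness assumption) is needed, and you observe that $M$ is nilpotent so the Neumann series terminates, rather than merely asserting $\rho(M)<1$ and convergence of an infinite series.
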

\begin{proof}
By construction, each entry $M(\widetilde{s},\widetilde{s}')$ is strictly positive only 
if $\widetilde{s}' \in \widetilde{N}(\widetilde{s})$, and in that case it is equal to 
$\exp(\tfrac{1}{\mu}\widetilde{v}(\widetilde{s}' \mid \widetilde{s}))$. Since 
$c(s,s')>0$ for all $(s,s')$, the accumulated cost strictly 
increases along any path in the extended state space. This implies that no cycles can 
exist in the transition structure of $M$, as eventually the accumulated cost exceeds 
$\alpha$, forcing termination.  

Hence, the matrix $M$ is strictly upper block-triangular after a suitable permutation 
of states, and its spectral radius satisfies $\rho(M)<1$. It follows that the Neumann 
series
$(I-M)^{-1} = \sum_{k=0}^\infty M^k$
converges. Therefore, $(I-M)$ is invertible, and the linear system admits the unique 
solution
$\widetilde{Z} = (I-M)^{-1} B.$
\end{proof}

An important implication of Proposition~\ref{prop:positive-cost} is that the estimation of the CRL and ERL 
models can be carried out more conveniently compared to the original recursive logit (RL) model. 
Specifically, prior work has shown that the estimation of the RL model may suffer from instability 
when the utility function takes large values, which can render the associated Bellman equations 
infeasible~\citep{MaiFrejinger22}.  By contrast, Proposition~\ref{prop:positive-cost} establishes that if the cost associated with each transition between 
states is strictly positive---a condition that is generally satisfied in practical applications---then 
the Bellman equations of the ERL model always admit a unique solution, regardless of the scale of 
the utility function $v(s' \mid s)$. This property ensures that the ERL formulation avoids the 
degeneracies that may arise in the standard RL setting and thereby provides a stable foundation 
for consistent estimation.

\section{Numerical experiments}\label{sec:exp}
In this section, we present a series of experiments to evaluate the effectiveness of the 
proposed CRL model in comparison with the standard RL model. The evaluation is carried out 
on both synthetic and real-world datasets. All experiments are conducted on a Google Cloud 
TPU v2-8 server equipped with a 96-core Intel Xeon 2.00~GHz CPU.

\subsection{Analysis on Synthetic Networks and Datasets}

We first conduct experiments on synthetic datasets in order to evaluate the 
performance of the proposed CRL model, compared to the standard RL model. 
Two problem settings of particular interest are considered:  
\begin{enumerate}
    \item \textbf{Route choice with travel time upper bounds.}  
    In this setting, travelers are assumed to only consider routes whose total 
    travel time does not exceed a pre-specified threshold. This reflects 
    practical scenarios where travelers face strict deadlines, such as arriving 
    on time for flights, meetings, or deliveries.  
    \item \textbf{Route choice with rechargeable vehicles.}  
    Here, the traveler must ensure that the remaining battery energy of the 
    vehicle never drops below zero along any chosen route. This captures 
    important features of modern transportation systems where electric or 
    hybrid vehicles are increasingly common, and route feasibility depends not 
    only on travel time but also on energy availability.
\end{enumerate}

\paragraph{Network generation.}  For computational tractability, we construct synthetic networks in the form of 
directed acyclic graphs (DAGs). The DAGs are generated as random geometric graphs 
\citep{Penrose2003RandomGeom}, a topology frequently employed for benchmarking 
graph algorithms. In this setting, node locations are sampled uniformly at random 
within a unit square $[0,1]\times[0,1]$, and edges are created by connecting nodes 
whose Euclidean distance falls below a prescribed threshold.
For each successive link, we associate four attributes---\textit{travel time, 
left turn, right turn, and U-turn}---derived from the geometric positions of the 
incident nodes and their relative orientations. The travel time is proportional to 
the Euclidean length of the link, while turn indicators are determined from the 
angular relationship between consecutive links. This setup provides a controlled, 
yet flexible, framework for simulating route choice scenarios with heterogeneous 
link characteristics while maintaining a mathematically tractable network 
structure.
To generate the observational 
data, we fix a ground-truth parameter vector
$\beta = [-4, \; -0.1, \; -0.05, \; -0.3]^{\top}.$
We then simulate path choices according to the RL formulation.  

\paragraph{Experimental setup.}  The number of observations used for parameter estimation and in-sample prediction 
is set to $3000$, while an additional $1000$ observations are generated for 
out-of-sample prediction. Parameter estimation is conducted by maximizing the 
likelihood of the in-sample dataset, yielding estimated parameters $\beta^*$.  

\paragraph{Evaluation metrics.}  The quality of estimation is evaluated using log-likelihood measures.  
Given an observation dataset 
\(\mathcal{D} = \{\tau_1, \ldots, \tau_n\}\), we define the average log-likelihood 
under the estimated parameter vector $\beta^*$ as $\mathcal{L}^{RL}(\mathcal{D}\mid \beta^*) = \frac{1}{n} \sum_{i=1}^n\ln P^{\RL}(\tau_i)$. 
Since log-likelihood values are always non-positive, values closer to zero 
indicate better alignment of the model with the observed data.  
Furthermore, to quantify the relative performance gain of CRL over RL, we further define the 
percentage improvement metric as:
\begin{equation}\nonumber
    \%Improve(\mathcal{D}) = 
  \frac{  \cL^{CRL}(\mathcal{D}) -  \cL^{RL}(\mathcal{D})}{|\cL^{RL}(\mathcal{D})|} 
    \times 100\%.
\end{equation}
A positive value of $\%Improve$ indicates that the CRL model achieves better performance (in terms of likelihood prediction) compared to the standard RL model.  

The synthetic experiments enable us to systematically assess the effect of 
incorporating feasibility constraints (such as travel time thresholds or energy 
limits) into the choice model. In the standard RL formulation, unreasonable paths 
--- for example, those that violate energy constraints or whose travel times far 
exceed practical limits --- may still receive positive choice probabilities. This 
undesirable feature often results in biased parameter estimates and degraded 
predictive accuracy. By contrast, the CRL model explicitly removes infeasible paths 
from the universal choice set, ensuring that only feasible routes are considered in 
the estimation process. This leads to more robust parameter recovery and improved 
generalization capability. Consequently, the CRL model is expected to  
outperform the RL model when constraints are taken into consideration, in both for in-sample estimation and out-of-sample prediction

\subsubsection{Route choice with travel time upper-bounds}
In this experiment, we consider a route choice setting in which the traveler imposes 
an upper bound on the total travel time. The networks and associated observations 
are generated as follows.  

\paragraph{Graph generation.}  
We construct directed acyclic graphs (DAGs) of varying sizes, specifically with 
$20$, $30$, $40$, and $50$ nodes. For each size, $5$ independent DAGs are generated 
using the random geometric graph method. More precisely, $N$ nodes are uniformly 
distributed within a $1 \times 1$ geometric space. An edge is drawn between two 
nodes if their Euclidean distance is smaller than $\frac{2}{\sqrt{N}}$. Additional 
edges are added if necessary to ensure that the graph is connected. The node with 
the smallest index is designated as the source, the node with the largest index as 
the destination, and all edges are oriented from smaller to larger indices, thereby 
ensuring acyclicity.  

\paragraph{Observation generation.}  
Let $T_{\max}$ denote the longest feasible travel time of any route from source to 
destination in a given DAG. The cost associated with each link is interpreted as the 
travel time on that edge. To impose feasibility, we introduce an upper bound 
$\alpha$ on the total travel time that a traveler is willing to accept, defined as  
$\alpha = \%threshold \times T_{\max},$
where $\%threshold$ is a parameter specifying the ratio of the upper bound relative 
to the longest possible route. We consider eight levels of $\%threshold$ ranging 
from $20\%$ to $90\%$.  

Once $\alpha$ is specified, we generate route choice observations under two models:  
\begin{itemize}
    \item \textbf{Recursive Logit (RL).} Observations are generated using the standard RL model. 
    For consistency with the imposed travel time constraint, we discard any path 
    samples whose total travel time exceeds $\alpha$, as such routes are deemed 
    unrealistic for the traveler.  
    \item \textbf{Constrained Recursive Logit (CRL).} Observations are generated using the CRL 
    formulation, where infeasible paths (i.e., with travel time greater than $\alpha$) 
    are systematically excluded from the universal choice set.  
\end{itemize}
\paragraph{Experimental protocol.}  
For each DAG and each $\%threshold$ setting, we conduct $10$ independent trials. 
In each trial, $3000$ in-sample and $1000$ out-of-sample route observations are 
generated. The estimated parameters obtained from the in-sample observations are 
then used to evaluate both in-sample and out-of-sample log-likelihood performance.  

To quantify the improvement of CRL over RL, we compute the percentage improvement 
metric $\%Improve$ (defined in the previous section). For each configuration, the 
average $\%Improve$ across $10$ trials is reported. This provides a robust 
comparison of the two models across varying network sizes and travel time 
thresholds, highlighting the conditions under which the CRL model yields the 
largest performance gains.



\paragraph{Comparison results.}  
The estimation results obtained when the RL model is used to generate observations 
are shown in \autoref{fig:u_unconstrained}. Out-of-sample results are almost 
indistinguishable from in-sample ones, indicating that both the observation size 
and the number of trials are sufficiently large to ensure consistent parameter 
estimation.  

\begin{figure}[htb] 
    \centering
    \begin{subfigure}{0.4\textwidth}
        \includegraphics[width=\linewidth]{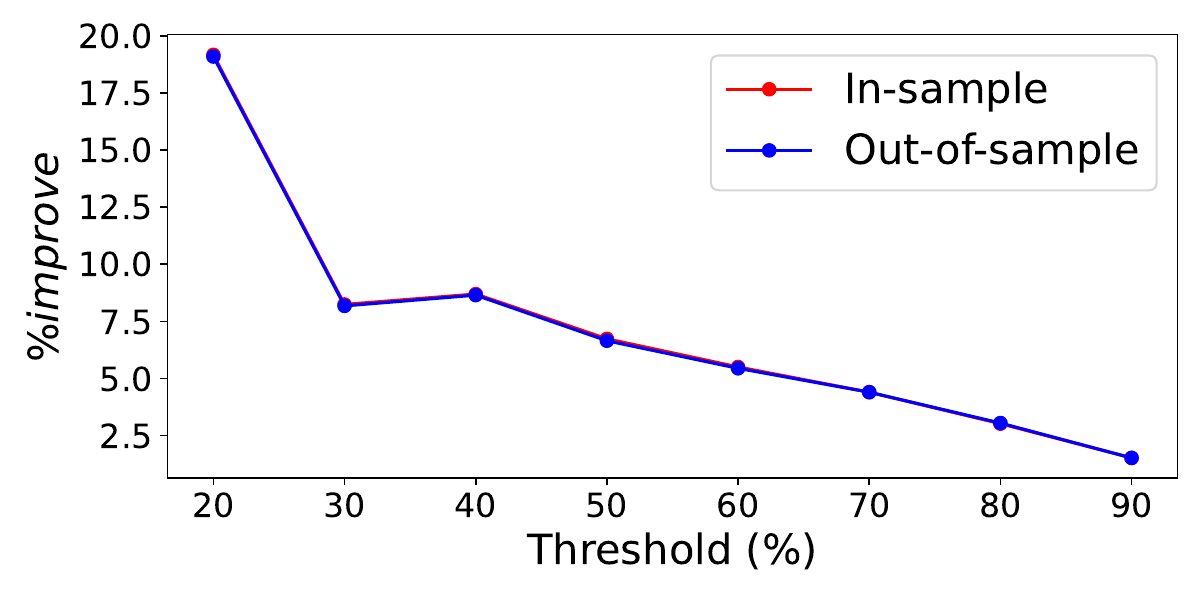}
        \caption{20 nodes}
    \end{subfigure}
    \begin{subfigure}{0.4\textwidth}
        \includegraphics[width=\linewidth]{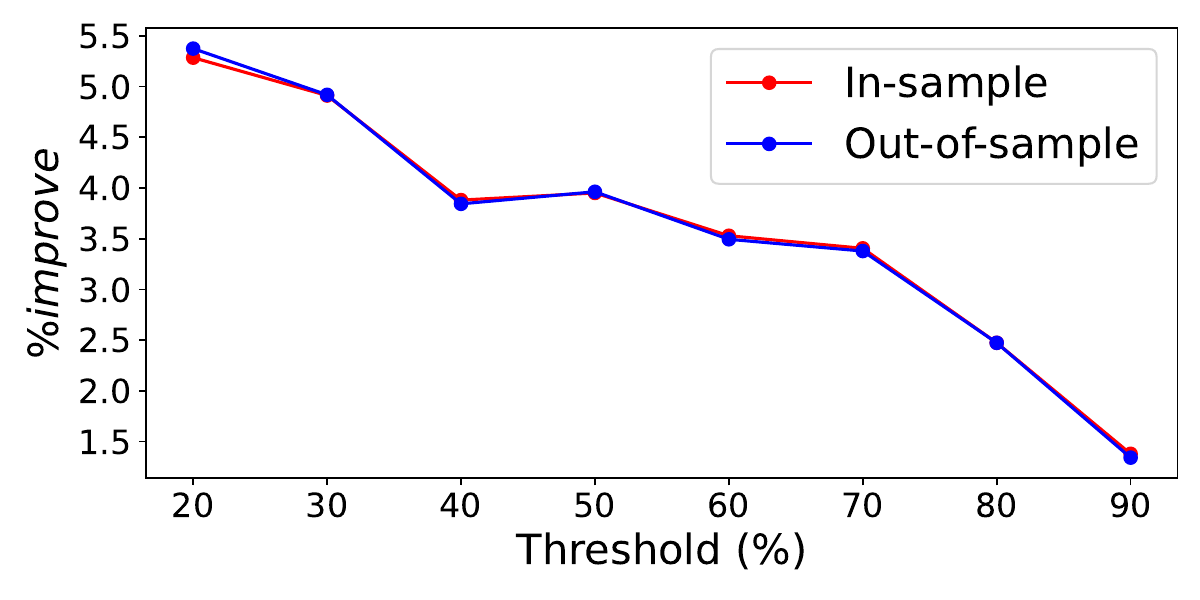}
        \caption{30 nodes}
    \end{subfigure}
    \begin{subfigure}{0.4\textwidth}
        \includegraphics[width=\linewidth]{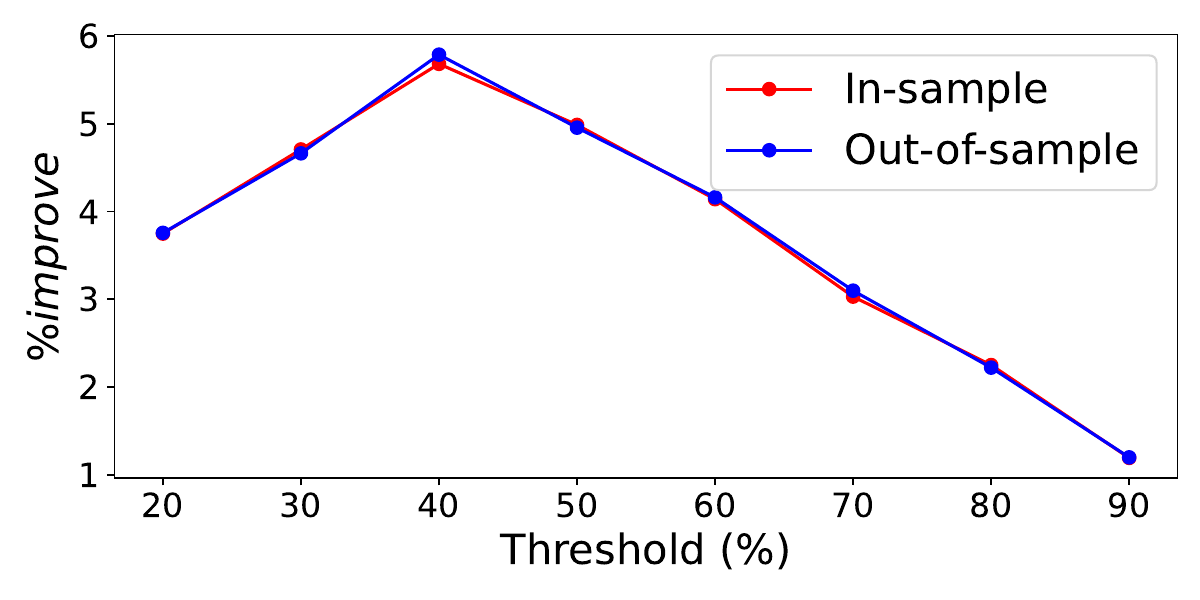}
        \caption{40 nodes}
    \end{subfigure}
    \begin{subfigure}{0.4\textwidth}
        \includegraphics[width=\linewidth]{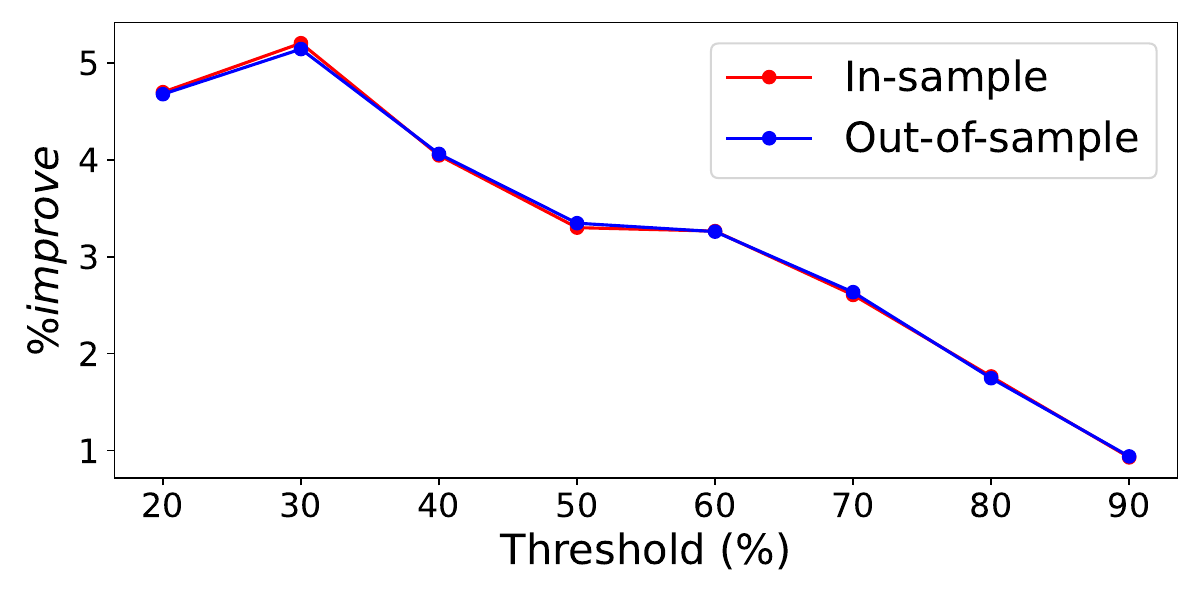}
        \caption{50 nodes}
    \end{subfigure}
    \caption{Estimation on synthetic observations generated by the RL model.} 
    \label{fig:u_unconstrained} 
\end{figure}

In terms of performance, the CRL model consistently achieves better (i.e., higher) 
log-likelihood values than the standard RL model. However, the performance gap 
between the two models decreases as the upper-bound threshold increases. This 
behavior is expected: when the threshold approaches $100\%$, nearly all routes 
remain feasible, and the CRL formulation reduces to the unconstrained RL model. 
Hence, under such conditions, the two models become virtually identical.  At the other extreme, when the threshold is very tight (e.g., $20\%$ or $30\%$ of 
$T_{\max}$), the improvement of CRL over RL is less pronounced and even fluctuates 
across trials. This effect can be attributed to the limited number of feasible 
routes available under such strict constraints, which reduces the variability of 
the choice set and may cause instability in estimation.  


%

When the observations are generated by the CRL model, the percentage improvements 
are reported in \autoref{fig:u_constrained}. In this case, the downward trend of 
$\%Improve$ is more stable: the improvement decreases smoothly from approximately 
$8$--$10\%$ at $\%threshold=20\%$ to nearly $0\%$ at $\%threshold=90\%$. This 
confirms the expected behavior that, as the upper bound becomes less restrictive, 
the CRL and RL models converge to similar performance.  

Another notable observation is that the performance gap between CRL and RL becomes 
less pronounced as the size of the DAG increases. This phenomenon is natural, since 
larger graphs imply exponentially larger route choice sets. With the number of 
observations fixed at $3000$, the sampled trajectories cover only a small fraction 
of the feasible choice set in larger graphs. As a result, the available data are 
less informative, and the distinction in performance between CRL and RL diminishes.

\begin{figure}[htb] 
    \centering
    \begin{subfigure}{0.4\textwidth}
        \includegraphics[width=\linewidth]{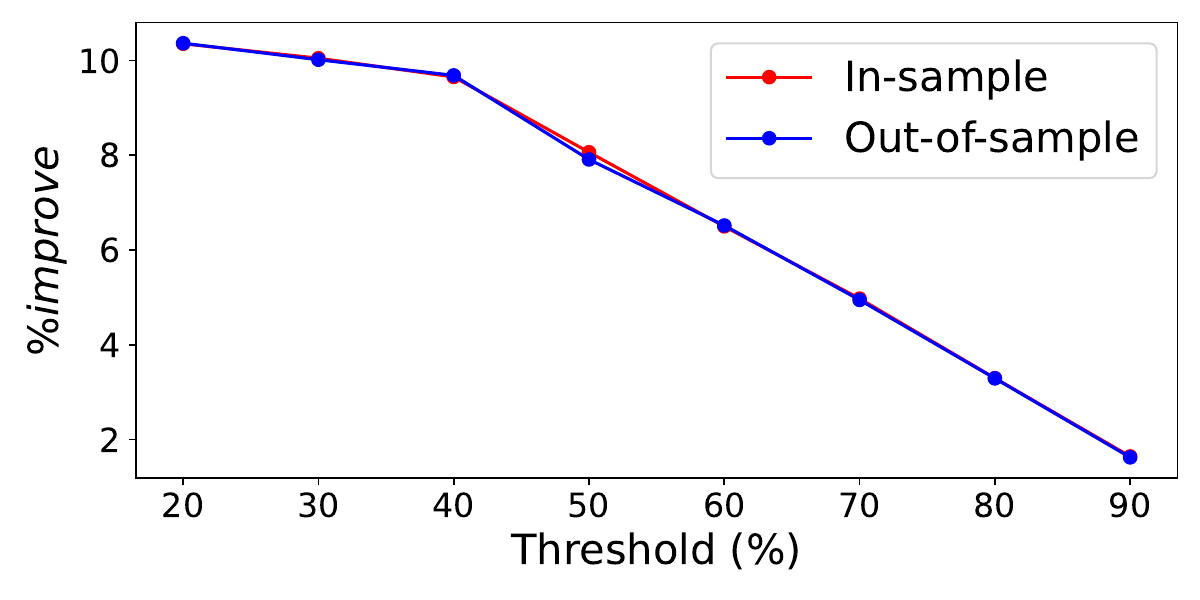}
        \caption{20 nodes}
    \end{subfigure}
    \begin{subfigure}{0.4\textwidth}
        \includegraphics[width=\linewidth]{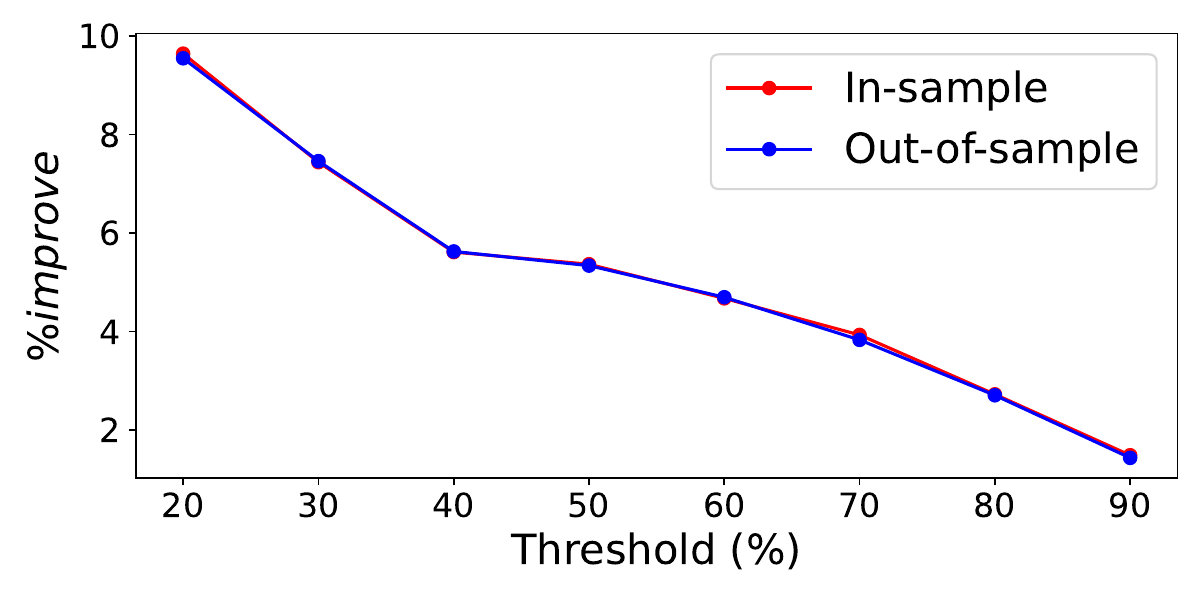}
        \caption{30 nodes}
    \end{subfigure}
    \begin{subfigure}{0.4\textwidth}
        \includegraphics[width=\linewidth]{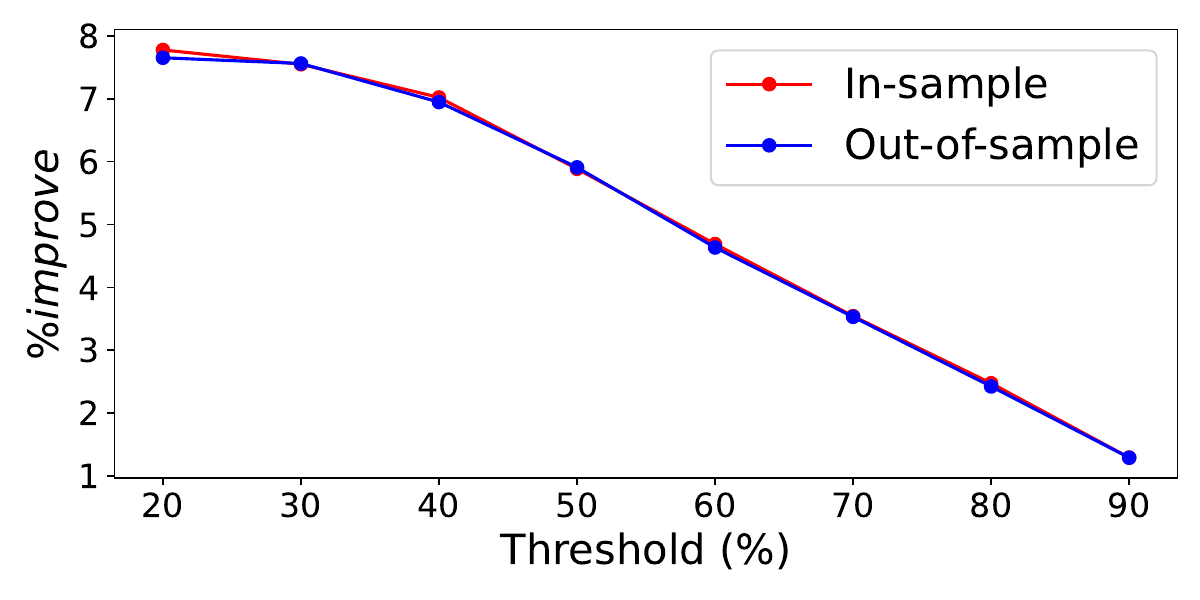}
        \caption{40 nodes}
    \end{subfigure}
    \begin{subfigure}{0.4\textwidth}
        \includegraphics[width=\linewidth]{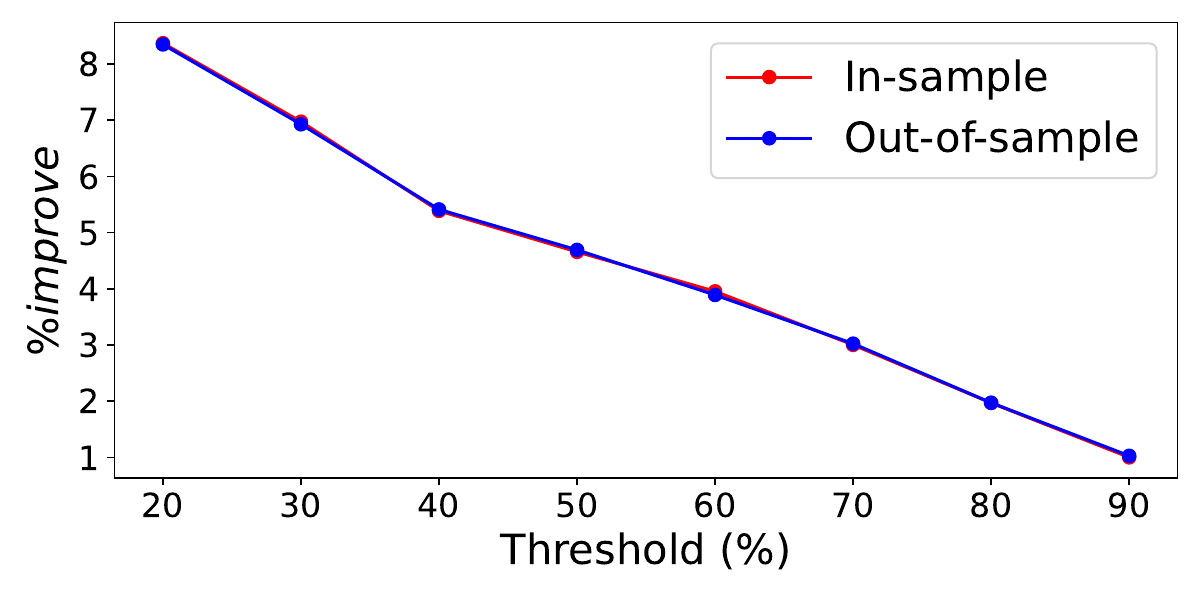}
        \caption{50 nodes}
    \end{subfigure}
    \caption{Estimation on synthetic observations generated by the CRL modes.} 
    \label{fig:u_constrained} 
\end{figure}

Taken together, these results demonstrate that the proposed CRL model offers estimation performance that is at least comparable to, and often better than, the standard RL model in the presence of upper-bound constraints. In particular, CRL  better aligns with realistic travel behavior when such feasibility constraints are binding.

\subsubsection{Route Choice for Rechargeable Vehicles}
We next conduct experiments in a more challenging setting that incorporates an 
\emph{energy consumption constraint} into the route choice behavior, mimicking 
the case of rechargeable vehicles. This setting is particularly relevant for 
modern transportation networks, where travelers must ensure that their remaining 
battery energy never falls below zero along the chosen route.  

\paragraph{Graph generation.}  
As in the previous case, we generate DAGs with sizes ranging from $20$ to $50$ 
nodes. To simulate the availability of recharging infrastructure, a set of 
``stations'' is randomly placed among the nodes of each graph. The number of 
stations is set to $10\%$ of the DAG size, reflecting the fact that only a 
limited fraction of nodes allow recharging.  

\paragraph{Observation generation.}  
Since the standard RL model cannot simulate the action of fully recharging 
energy at stations (and thus cannot generate feasible trajectories under the 
energy constraint), we rely exclusively on the proposed CRL model to generate 
observations in this setting. This highlights an important strength of the CRL 
framework: its ability to naturally incorporate feasibility constraints into 
the path choice process.  For each DAG size, we generate $5$ independent graph instances with randomly 
placed stations. For each instance, synthetic route choice data are generated 
under the CRL model.

\begin{figure}[htb] 
    \centering
    \begin{subfigure}{0.8\textwidth}
        \includegraphics[width=\linewidth]{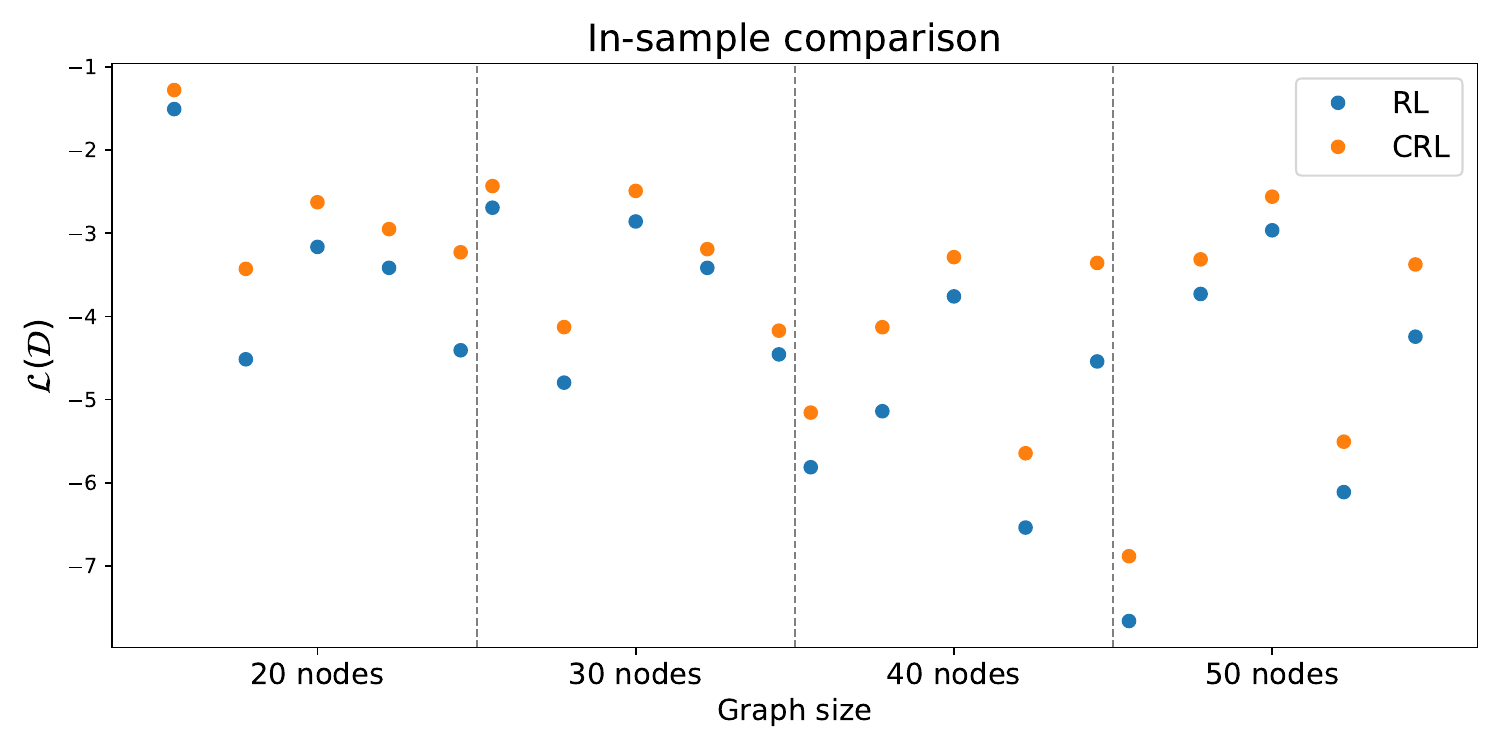}
        \caption{In-sample}
        \label{subfig:u_fuel_train}
    \end{subfigure}
    \begin{subfigure}{0.8\textwidth}
        \includegraphics[width=\linewidth]{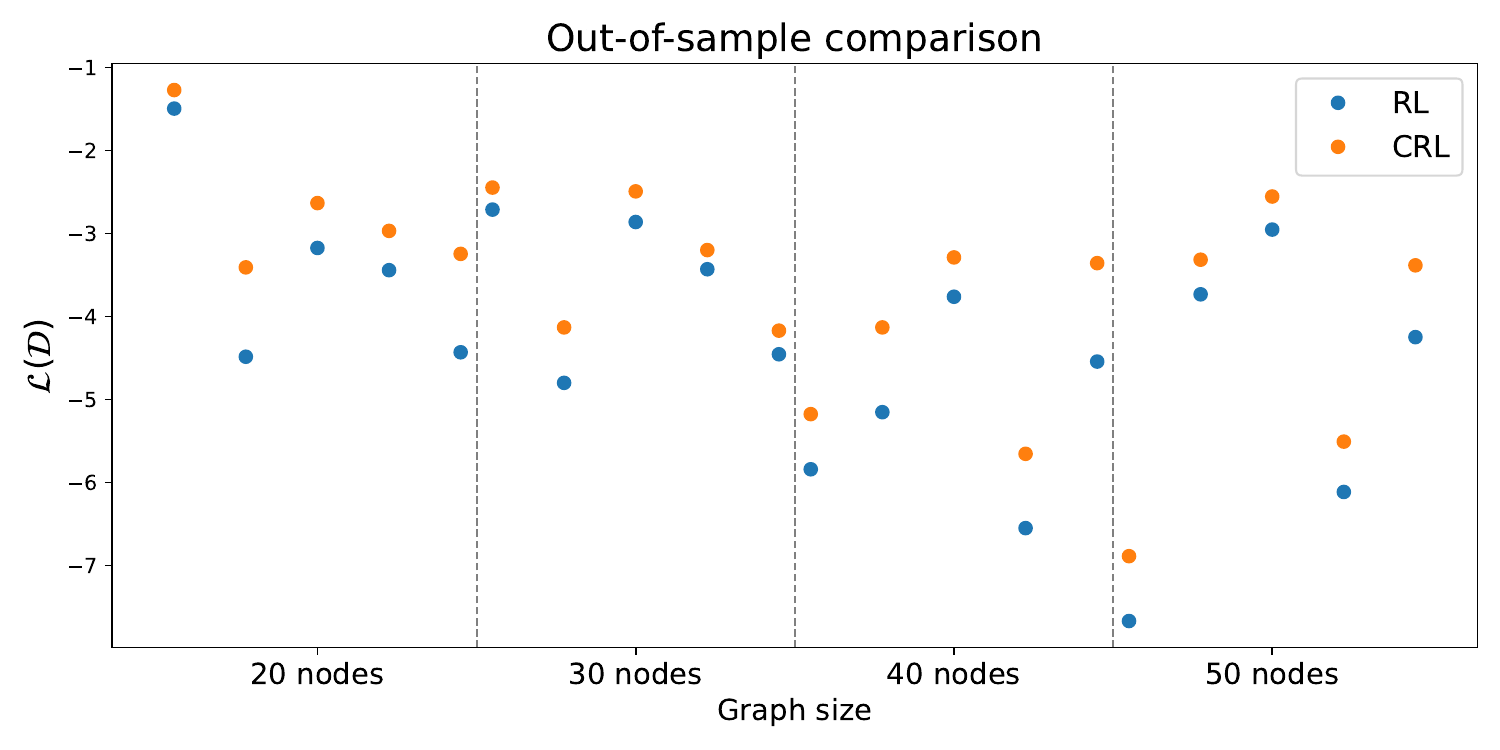}
        \caption{Out-of-sample}
        \label{subfig:u_fuel_test}
    \end{subfigure}
    \caption{In-sample  and out-of-sample average log-likelihood comparisons for rechargeable vehicles.} 
    \label{fig:u_fuel} 
\end{figure}

\autoref{fig:u_fuel} reports the estimation errors of both RL and CRL in the 
energy-constrained setting. The performance gap between the two models is 
substantially larger here: CRL (orange dots) consistently outperforms RL (blue dots) 
across all graph sizes. The average improvements of CRL over RL for DAGs with 
$20$, $30$, $40$, and $50$ nodes are $21\%$, $10\%$, $17\%$, and $13\%$, respectively.  
This significant gain highlights the importance of explicitly modeling feasibility 
constraints. While the upper-bound constraint considered previously only limits the 
set of acceptable routes by travel time, the recharging constraint introduces a 
more intricate form of path feasibility that requires accounting for energy dynamics 
along the route. Since the standard RL model cannot capture this complexity, it 
misallocates probability mass to infeasible trajectories, leading to poorer 
estimation. By contrast, the CRL model systematically enforces the energy 
constraint through the extended state space, thereby yielding consistently higher 
likelihood values.

\subsubsection{Estimation Time Comparison:}
As discussed earlier, the CRL model entails higher computational requirements 
compared to the standard RL model. We now briefly examine this aspect. 
Figure~\ref{fig:time-rl-generated} compares the estimation times of RL and CRL 
across different network sizes and threshold levels. For small networks 
(20--30 nodes), the estimation times of the two models are nearly identical, 
with both completing estimation in under two seconds across all threshold 
percentages. This suggests that the additional state-space expansion required 
by CRL does not impose a significant computational burden at modest network 
scales.

As the network size increases to 40 and 50 nodes, differences between the two 
models become more noticeable. While the RL model maintains nearly constant 
estimation times (around $0.4$--$0.6$ seconds), the CRL model exhibits increasing 
estimation times as the threshold percentage grows, reaching about $2.8$ seconds 
for the largest networks under the most relaxed constraints. This reflects the 
fact that CRL estimation must operate on an extended state space, whose size grows 
both with the number of nodes and with looser feasibility bounds.  

Overall, these results highlight a trade-off: CRL ensures behavioral consistency 
by eliminating infeasible paths, but at the expense of increased computational 
burden in large and cyclic networks.

\begin{figure}[H]
    \centering
    \includegraphics[width=0.9\textwidth]{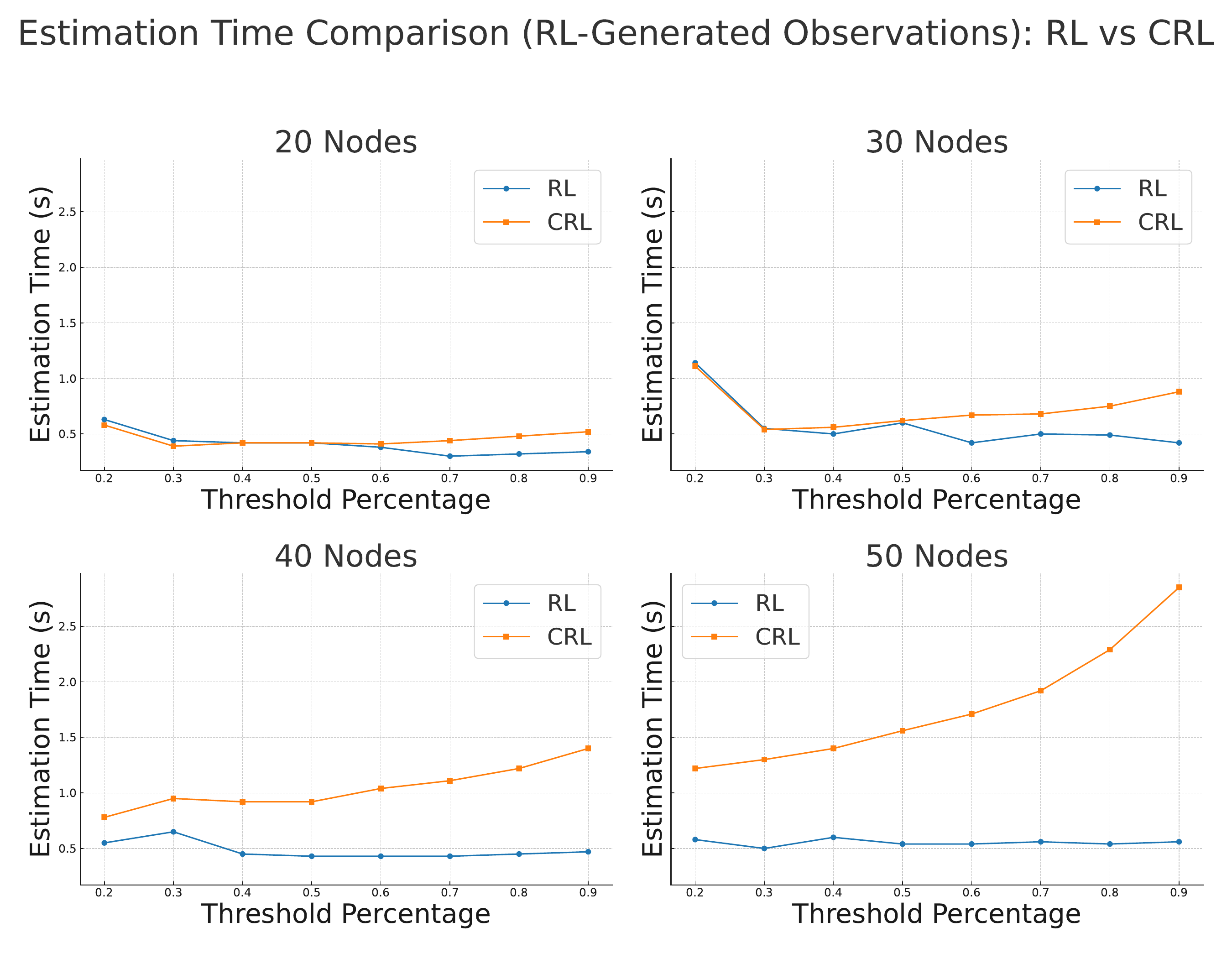}
    \caption{Estimation time comparison between RL and CRL on synthetic datasets across varying network sizes and threshold levels.}
    \label{fig:time-rl-generated}
\end{figure}

\subsection{Experiments on  Sioux Falls Network }

\paragraph{Sioux Falls network.}  
In addition to synthetic datasets, we evaluate the proposed CRL model on the 
well-known Sioux Falls transportation network. Originally introduced by 
\citet{LeBlanc1975SiouxFalls}, this network has become a standard benchmark in the 
transportation science and operations research literature for testing traffic 
assignment and route choice models. The Sioux Falls network consists of 
$24$ nodes and $76$ directed links, together with an origin--destination (OD) 
demand matrix that specifies travel demand between selected node pairs. Each link 
is associated with a free-flow travel time and a capacity parameter, allowing 
link travel times to be modeled under congestion effects (e.g., using the Bureau 
of Public Roads function).  

The Sioux Falls network is particularly attractive as a benchmark for route 
choice models because it strikes a balance between realism and tractability: 
it is large enough to capture meaningful network structure and route diversity, 
yet small enough to permit efficient computation of RL and 
CRL likelihoods. In our experiments, we use the Sioux 
Falls network to assess the practical applicability of the CRL model on a 
realistic network topology and to compare its performance against the standard 
RL model. 

\begin{figure}[htb] 
    \centering
    \includegraphics[width=0.5\linewidth]{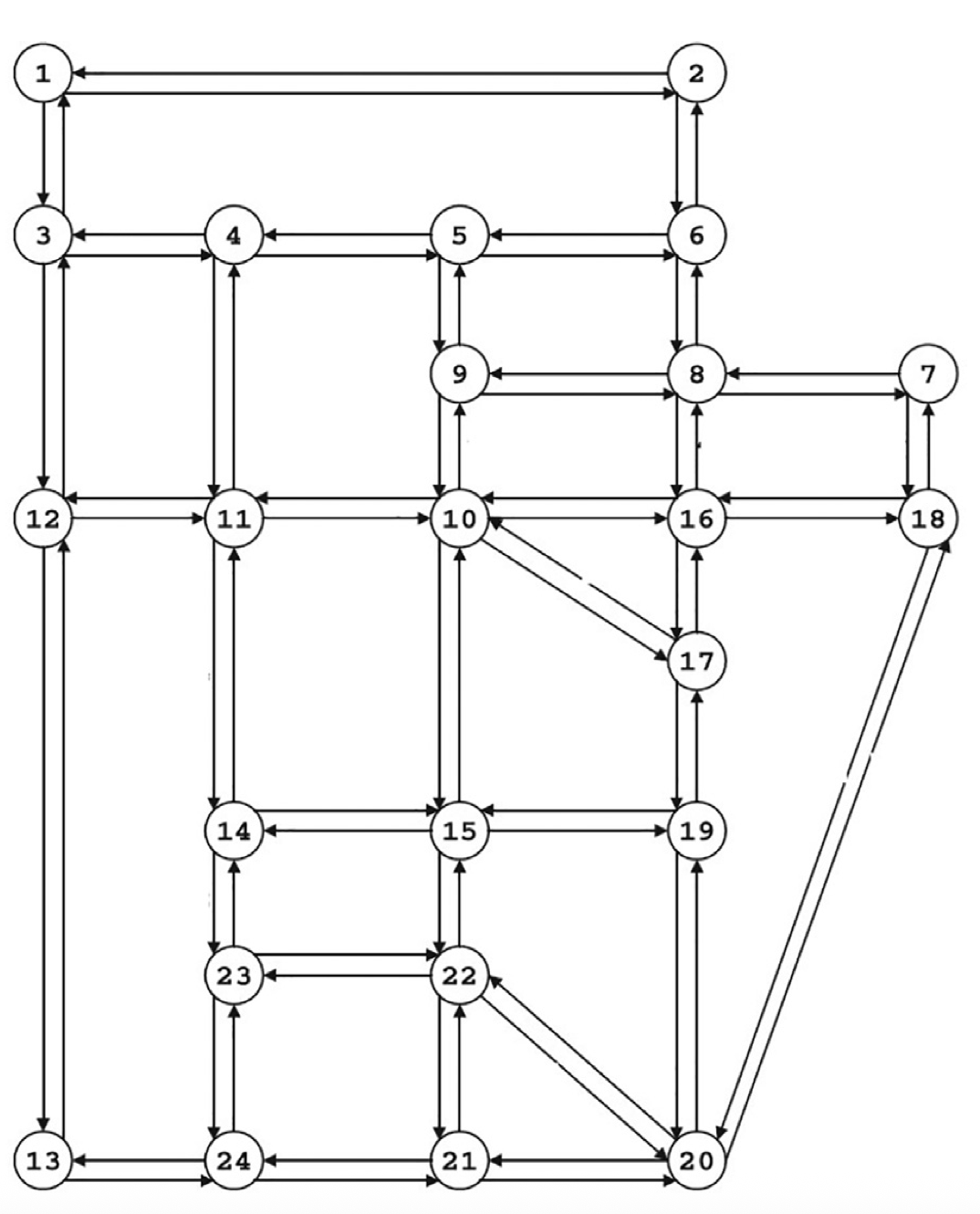}
    \caption{Sioux Falls network.} 
    \label{fig:siouxfalls} 
\end{figure}

As mentioned in Section~\ref{sec:extended-state-space}, one of the main advantages of the CRL model is that, 
under mild assumptions on the cost function, the Bellman equations always yield a 
unique solution. This solution can be computed efficiently either through value 
iterations or matrix inversion, thereby ensuring stable and efficient estimation.  
In contrast, as highlighted in \citep{MaiFrejinger22}, the original RL model often fails to 
produce reasonable parameter estimates, particularly in networks with many cycles. 
In such cases, the recursive structure of the Bellman equations may break down or 
lead to unstable likelihood evaluations. The Sioux Falls network provides a 
challenging testbed in this respect: its topology contains numerous cycles, making 
estimation under the standard RL model highly unstable.  

The primary objective of this experiment is therefore to compare the estimation 
stability of RL and CRL on the Sioux Falls network. By evaluating performance on 
this benchmark, we can assess whether the CRL formulation indeed provides the 
robustness required for realistic networks with cyclic structures.

\paragraph{Observation generation.}  
Due to the failure of the RL model in estimation (as reported below), we rely on the 
CRL model to generate observations. In this setting, the cost function is specified as 
$c(s,s') = 1$ for every edge $(s,s')$. This specification effectively imposes a 
constraint on the maximum number of edges that a traveler can traverse to reach the 
destination. Importantly, this choice satisfies the assumption stated in Section~\ref{sec:extended-state-space}, 
and therefore guarantees that the value function can always be computed successfully.  

Following the same experimental protocol as before, we generate $3000$ in-sample 
trajectories and $1000$ out-of-sample trajectories. The CRL model incorporates six 
attributes to simulate route choice probabilities: \emph{road capacity, road length, 
travel time, left turns, right turns}, and \emph{U-turns}. The associated parameter 
vector used for data generation is 
$\beta = [-0.5,\; 0,\; 1,\; -0.1,\; -0.05,\; -0.3]^{\top}.$
Note that since some of these coefficients are positive, certain arcs may yield 
positive utility values. This, in turn, can cause the unconstrained RL model to 
misbehave and fail during estimation, highlighting the necessity of the CRL 
framework for stable likelihood evaluation in such settings.

\begin{table}[htbp]
\centering
    \caption{Average log-likelihood values at estimation for the Sioux Falls dataset}
    \label{tab:siouxfalls}
\begin{tabular}{c|c|c}
\hline
$\alpha$ (upper bound on number of edges) & RL   & CRL  \\ \hline
10  & --   & -4.70  \\
15  & --   & -11.69 \\
20  & --   & -17.71 \\
25  & --   & -23.90 \\ \hline
\end{tabular}
\end{table}


The estimated average log-likelihood values for the two models are reported in 
\autoref{tab:siouxfalls}. As shown, the RL model fails to return reasonable 
log-likelihood values across all experimental setups. This failure occurs because 
the solver is unable to compute the Bellman equations successfully in networks 
with many cycles, resulting in undefined values. We denote these unsuccessful 
cases by “-” in the table.  In contrast, the CRL model can be estimated successfully in all settings. 
Furthermore, its estimation results exhibit a clear downward trend as the maximum 
number of edges allowed by the constraint $\alpha$ increases. When the choice set 
is very small (e.g., routes with no more than $10$ edges), the estimated 
log-likelihood values are relatively high. As $\alpha$ increases (e.g., allowing 
up to $25$ edges), the log-likelihood values decrease substantially. This behavior 
is intuitive: for a fixed number of observations, expanding the restricted 
universal choice set introduces many additional feasible paths, thereby lowering 
the probability assigned to any particular observed route.  

These results illustrate two key insights. First, the CRL formulation provides a 
stable and feasible estimation procedure even in challenging cyclic networks such 
as Sioux Falls, where the original RL model completely fails. Second, the 
relationship between $\alpha$ and the log-likelihood reflects the natural trade-off 
between model flexibility (larger choice sets) and estimation sharpness (higher 
likelihood concentration on observed data).

\subsection{Real-world Large-scale Traffic Network}

In this section, we present a comparative analysis of estimation results for the 
RL and CRL models. The dataset employed is the same as that used in previous 
studies on route choice modeling~\citep{FosgFrejKarl13,MaiFosFre15,MaiBasFre15_DeC}, collected in Borlänge, Sweden.  The underlying network consists of $3077$ nodes and $7459$ links. Since the network is uncongested, link travel times can be treated as static and 
deterministic. The sample comprises $1832$ observed trips, each corresponding to 
a simple path containing at least five links. In total, the dataset covers 
$466$ destinations, $1420$ distinct origin--destination (OD) pairs, and more 
than $37{,}000$ link choices. The instantaneous utility function is given as follows:
\begin{equation}\nonumber
    v(a|k;\beta) = \beta_{TT} TT(a) + \beta_{LT} LT(a|k) + \beta_{LC} LC + \beta_{UT} UT(a|k) 
\end{equation}
in which $TT(a)$ is the travel time on link $a$, $LT(a|k)$ is a left turn dummy of the turn from link $k$ to $a$, $LC$ is a constant one value and $UT(a|k)$ is a u-turn dummy from $k$ to $a$.


Both RL and CRL are applied for in-sample estimation on the observation dataset. 
All experimental settings follow those in the original RL model publication 
\citep{FosgFrejKarl13}. In this experiment, the constraint incorporated by the 
CRL model is a bound constraint on the \emph{link count} of each route, which is 
defined as the sum of the $LC$ attribute values over the links in the route.   The distribution of link counts across the $1832$ observed routes is illustrated 
in \autoref{fig:borlange-counts}. The maximum number of links in a single route is 
$106$, which naturally serves as the upper bound for the constraint. 
It is important to note that link count is already a discrete variable, making it 
straightforward to apply the CRL model in this context. Moreover, as discussed in 
Section~\ref{sec:extended-state-space}, the strict positivity of the cost values (here given by $LC$) ensures 
that the extended network is cycle-free. This property guarantees that the CRL 
model can be estimated efficiently using standard solution methods.

\begin{figure}[htbp] 
    \centering
    \includegraphics[width=0.8\linewidth]{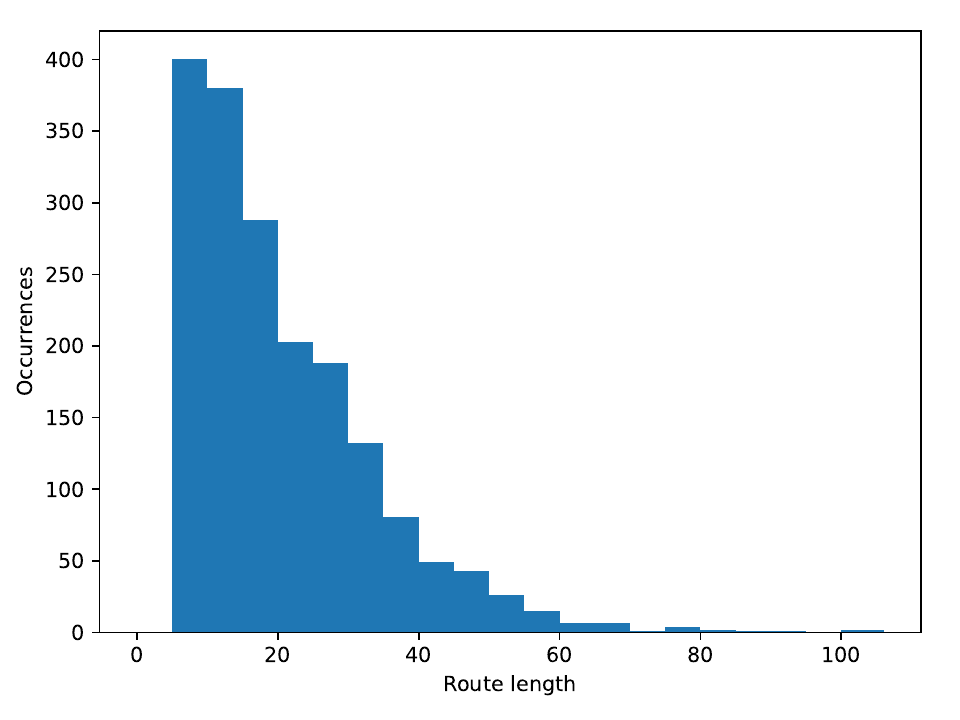}
    \caption{Distribution of link counts in the observation set.} 
    \label{fig:borlange-counts} 
\end{figure}

\begin{table}[htbp]
    \centering
\begin{tabular}{l|l|rrrr|c}
\multirow{2}{*}{Model} & \multirow{2}{*}{Property}  & \multicolumn{4}{c|}{Attributes}   & \multirow{2}{*}{$\cL(\mathcal{D})$} \\ \cline{3-6}
                       &                            & $TT$   & $LC$   & $LT$   & $UT$   &                                   \\ \hline
\multirow{3}{*}{RL}    & $\widehat{\mathbf{\beta}}$ & -2.49  & -0.41  & -0.93  & -4.46  & \multirow{3}{*}{-3.44}             \\
                       & Std. Err.                  & 0.07   & 0.01   & 0.03   & 0.10   &                                   \\
                       & $t$-test(0)                & -37.90 & -39.58 & -37.25 & -45.39 &                                   \\ \hline
\multirow{3}{*}{CRL}   & $\widehat{\mathbf{\beta}}$ & -2.51  & -0.41  & -0.93  & -4.56  & \multirow{3}{*}{-3.39}             \\
                       & Std. Err.                  & 0.07   & 0.01   & 0.03   & 0.10   &                                   \\
                       & $t$-test(0)                & -38.48 & -37.38 & -34.11 & -45.05 &                                   
\end{tabular}
    \caption{Estimation results on the Borlänge dataset.}
    \label{tab:borlange}
\end{table}

The estimation results for both RL and CRL are reported in \autoref{tab:borlange}. 
Overall, the estimated parameters are very close across the two models, suggesting 
that both approaches capture the main behavioral patterns in the dataset. However, 
the CRL model achieves a slightly higher log-likelihood value ($-3.39$ compared to $-3.44$ 
for RL), corresponding to an improvement of approximately $1.5\%$ in the likelihood 
value.

We also apply a cross-validation approach to compare the prediction performance of RL and CRL. The full sample set is randomly split into two subsets: 80\% of the observations are used for estimation, while the remaining 20\% serve as a holdout set to assess predictive capability. In total, 30 different estimation–holdout partitions are generated. For each partition, we estimate the parameters on the estimation set and then compute the average log-likelihood on the holdout set using the estimated parameters. These predicted log-likelihood values across the 30 samples are shown in \autoref{fig:borlange-out-samples}, with the samples arranged in ascending order of RL’s prediction values for clarity. CRL consistently outperforms RL across all samples, with an average predicted log-likelihood of $-3.38$ compared to $-3.43$ for RL.
\begin{figure}[htb]
    \centering
    \includegraphics[width=0.55\textwidth]{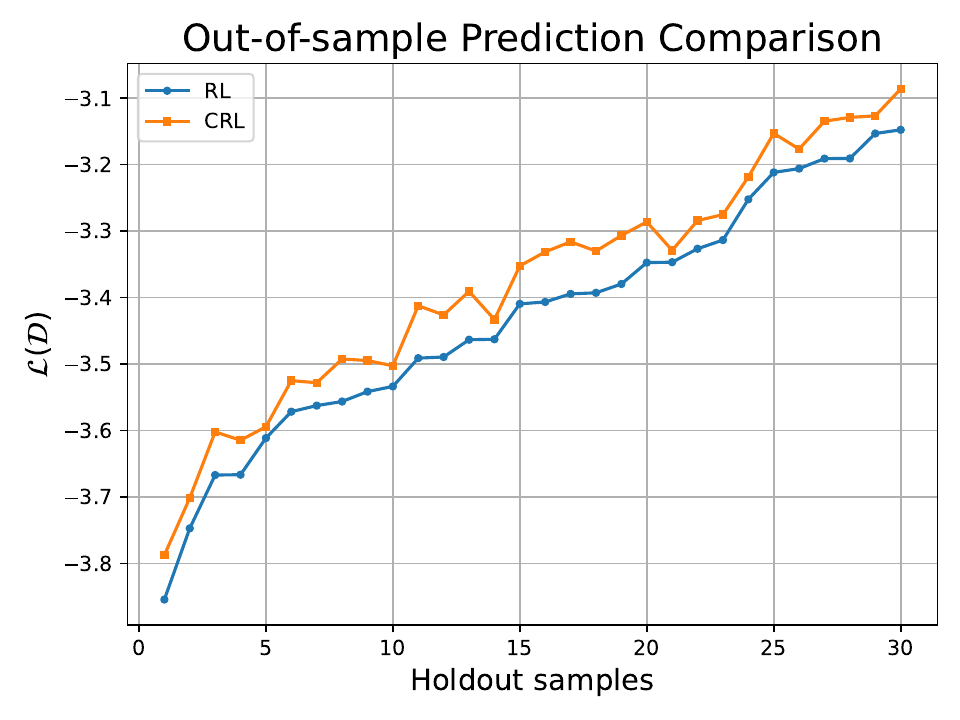}
    \caption{Average log-likelihood values over holdout samples for the Borlänge dataset.}
    \label{fig:borlange-out-samples}
\end{figure}
 
This modest yet systematic improvement demonstrates that the CRL formulation is 
capable of capturing additional structural constraints in traveler behavior that 
are overlooked by the unconstrained RL model. It should be noted, however, that 
the magnitude of the improvement is relatively smaller compared to the case of 
synthetic data reported earlier. This is largely due to the fact that the Borlänge 
network is uncongested, so travelers are not strongly constrained by travel time 
or link count. As a result, paths with the shortest travel times dominate the 
choice probabilities, leaving limited scope for CRL to provide large gains over 
the unconstrained RL formulation. Nevertheless, even under such favorable 
conditions for RL, the CRL model still delivers consistent in-sample and 
out-of-sample improvements, underscoring its robustness in capturing feasibility 
conditions whenever they are present.

It is worth noting that the computational cost of CRL is substantially higher 
than that of RL (approximately $50\times$ higher for this real dataset). This 
increase reflects the added complexity of operating on the extended state space. 
Nevertheless, the improved model fit demonstrates that CRL provides significant 
value in situations where datasets involve constraints or atypical features that 
cannot be adequately captured by the standard RL model. Importantly, estimation 
remains feasible in practice: it can be completed in under ten hours, and further 
speedups are possible through parallel computing or decomposition techniques 
\citep{MaiBasFre15_DeC}. These observations highlight the practical viability of 
CRL for large-scale, real-world datasets.

\section{Conclusion}\label{sec:concl}
This paper introduced the CRL model, which extends 
the classical RL framework by explicitly incorporating feasibility constraints 
into the universal choice set. By enforcing that any path violating a constraint 
receives zero probability, the CRL model yields behaviorally consistent 
predictions while retaining the computational advantages of RL. We showed that 
estimation can be made tractable through a state-space extension, established 
equivalence to path-based MNL under restricted universal choice sets, and demonstrated 
extensions to multiple constraints and nested RL. Empirical results on both 
synthetic and real networks confirm that CRL provides more realistic predictions 
and stable estimation compared to RL, especially in cyclic networks. Future work 
may focus on improving computational efficiency of the extended state-space 
estimation for large-scale networks, as well as exploring relaxed formulations 
where infeasible paths are penalized rather than strictly excluded, thereby 
broadening the applicability of CRL in practical contexts.

\bibliographystyle{plainnat_custom}
\bibliography{refs}

\appendix
\clearpage 

\begin{center}
    {\Huge Appendix}
\end{center}

\section{Extension to Multiple-Constrained RL}
We now extend the CRL framework to accommodate multiple constraints. Such a 
multi-constrained formulation generalizes the single-constraint CRL model to 
settings where several feasibility conditions must be satisfied simultaneously. 
For instance, an electric vehicle traveler may face both a travel-time deadline 
and a battery capacity limit, while a bike-sharing trip may be restricted by a 
maximum rental duration and station availability. The multi-constraint CRL 
framework incorporates these conditions directly into the recursive structure, 
preserving the Markovian property while systematically assigning zero 
probability to any path that violates one or more constraints.

As a starting point, we provide a formal definition of a feasible 
path in the context of multiple constraints.

\begin{definition}[Feasible Path Set under Multiple Stepwise Constraints]
Let $\mathbf{c}(s,s') = \bigl(c^1(s,s'), \ldots, c^K(s,s')\bigr)$ denote a 
$K$-dimensional vector of transition costs, and let 
$\boldsymbol{\alpha} = (\alpha_1,\ldots,\alpha_K)$ be the corresponding vector of 
upper bounds. For a path $\tau = \{s_0,\ldots,s_T\}$, define the accumulated 
cost vector up to step $t$ as
\[
\mathbf{C}(S_t) = \left( \sum_{j=0}^{t-1} c^1(s_j,s_{j+1}), ~ \ldots, ~
                     \sum_{j=0}^{t-1} c^K(s_j,s_{j+1}) \right),
\]
where $S_t = \{s_0,\ldots,s_t\}$ is the partial trajectory up to time $t$.  

The \emph{feasible path set} under multiple stepwise constraints is defined as
\[
\widetilde{\Omega}^{\boldsymbol{\alpha}} = 
\left\{ \tau = \{s_0,\ldots,s_T\} ~\middle|~ s_T = d,~
C^k(S_t) \leq \alpha_k \;\; \forall k=1,\ldots,K,~\forall t=0,\ldots,T \right\}.
\]
\end{definition}
In words, a path $\tau$ is feasible if, at every intermediate step, the 
accumulated cost in \emph{each} constraint dimension $k$ does not exceed its 
respective bound $\alpha_k$. Any path that violates even a single constraint at 
any step is excluded from $\widetilde{\Omega}^{\boldsymbol{\alpha}}$ and is 
therefore assigned zero probability in the CRL model. This ensures that 
constraint feasibility is enforced throughout the trajectory, not merely at 
the destination.

The state-space extension framework introduced above can be naturally generalized 
to handle multiple constraints simultaneously.
 Specifically, let 
\[
\mathbf{c}(s,s') = \bigl(c^1(s,s'), c^2(s,s'), \ldots, c^K(s,s')\bigr)
\]
denote a $K$-dimensional vector of constraint functions associated with the 
transition from $s$ to $s'$, where each $c^k(s,s')$ represents the cost 
contributing to the $k$-th constraint (e.g., travel time, energy consumption, 
rental duration). Let $\boldsymbol{\alpha} = (\alpha_1, \alpha_2, \ldots, \alpha_K)$ 
be the corresponding vector of upper bounds.  

\paragraph{Extended state space.}  
We define an extended state space as 
\[
\widetilde{\mathcal{S}} = \{ (s,\mathbf{z}) \;\mid\; s \in \mathcal{S},\; \mathbf{z} \in \mathbb{R}^K \},
\]
where $\mathbf{z} = (z^1,\ldots,z^K)$ records the accumulated values of all 
constraints along the path. For an extended state $\widetilde{s} = (s,\mathbf{z})$, 
the set of feasible successors is defined as
\[
\widetilde{N}(\widetilde{s}) 
= \Bigl\{ (s',\mathbf{z}') \;\Big|\; s' \in N(s),\; 
    \mathbf{z}' = \mathbf{z} + \mathbf{c}(s,s') \Bigr\}.
\]

\paragraph{Value function recursion.}  
The recursive definition of the value function now reads
\[
\widetilde{V}(\widetilde{s}) =
\begin{cases}
    0, & \text{if } s = d \;\;\text{and}\;\; z^k \leq \alpha_k \;\; \forall k, \\[6pt]
    -\infty, & \text{if } \exists k: z^k > \alpha_k, \\[6pt]
    \mu \ln \left( 
        \sum_{\widetilde{s}' \in \widetilde{N}(\widetilde{s})} 
        \exp\!\left( \tfrac{1}{\mu} \bigl[ 
            \widetilde{v}(\widetilde{s}' \mid \widetilde{s}) 
            + \widetilde{V}(\widetilde{s}') 
        \bigr] \right) 
    \right), & \text{otherwise}.
\end{cases}
\]

\paragraph{Choice probabilities.}  
The conditional choice probabilities at each extended state follow the standard 
MNL form:
\[
\widetilde{\pi}(\widetilde{s}' \mid \widetilde{s}) 
= \frac{\exp\!\left( \tfrac{1}{\mu} \bigl[ 
    \widetilde{v}(\widetilde{s}' \mid \widetilde{s}) 
    + \widetilde{V}(\widetilde{s}') \bigr]\right)}
{\sum\limits_{\bar{s} \in \widetilde{N}(\widetilde{s})} 
\exp\!\left( \tfrac{1}{\mu} \bigl[ 
    \widetilde{v}(\bar{s} \mid \widetilde{s}) 
    + \widetilde{V}(\bar{s}) \bigr]\right)}.
\]
As in the single-constraint case, the path choice probabilities generated by the 
multi-constrained CRL model are equivalent to those of a multinomial logit (MNL) 
model defined over the restricted feasible path set 
$\widetilde{\Omega}^{\boldsymbol{\alpha}}$. Specifically, for any path 
$\tau$ we have
\begin{equation}
P^{\text{CRL}}(\tau) = 
\begin{cases}
\dfrac{\exp\!\left( \tfrac{1}{\mu} v(\tau) \right)}
      {\sum\limits_{\tau' \in \widetilde{\Omega}^{\boldsymbol{\alpha}}} 
       \exp\!\left( \tfrac{1}{\mu} v(\tau') \right)} 
& \text{if } \tau \in \widetilde{\Omega}^{\boldsymbol{\alpha}}, \\[12pt]
0 & \text{otherwise},
\end{cases}
\end{equation}
where $v(\tau)$ denotes the deterministic utility of path $\tau$.

\paragraph{Computational complexity.} An important consideration in the multi-constraint CRL framework is the 
computational complexity of estimation, which grows with the number of 
constraints. Since the extended state space augments each network state $s \in 
\mathcal{S}$ with a $K$-dimensional accumulated cost vector $\mathbf{z} \in 
\mathbb{R}^K$, the total number of extended states is proportional to 
$|\mathcal{S}| \times |\mathcal{Z}_1| \times \cdots \times |\mathcal{Z}_K|$, 
where $\mathcal{Z}_k$ denotes the discretized support of the $k$-th constraint. 
Thus, the size of the state space grows exponentially in $K$, leading to a 
potential \emph{curse of dimensionality}. In practice, however, this issue can be 
mitigated in several ways: many constraints (e.g., energy consumption, rental 
duration) are naturally discretized, and the feasible region defined by the 
bounds $\alpha_k$ significantly restricts the effective state space. Moreover, 
infeasible states are pruned dynamically during value function iteration, which 
reduces the computational burden relative to the worst-case size of the 
augmented state space. Nevertheless, when the number of constraints becomes 
large, estimation of the CRL model may become challenging, and efficient 
approximation strategies or problem-specific relaxations may be required.

\section{Constrained Nested RL (CNRL) Model}
We now discuss how our framework can be extended to the nested recursive logit 
(NRL) model \citep{MaiFosFre15}. The NRL model is designed to capture 
correlations between path utilities by introducing a nesting structure in the 
error terms. In the constrained setting, this nesting can be incorporated into 
the CRL framework by allowing the scale parameter $\mu$ to vary across states. 
Estimation can then be carried out on the extended state space using the NFXP 
algorithm, analogous to the procedure described for CRL.  

In particular, each extended state $\widetilde{s} \in \widetilde{\mathcal{S}}$ 
is associated with a local dispersion parameter $\mu_{\widetilde{s}}$, which 
governs the variance of the random utility shocks at that state. This 
state-dependent parameterization provides the flexibility to represent 
correlation patterns across subsets of alternatives, while the constrained 
structure of CRL continues to ensure that infeasible paths are excluded from the 
universal choice set.

\paragraph{Value function recursion.}  
The recursive definition of the value function in the extended state space now 
takes the form
\[
\frac{1}{\mu_{\widetilde{s}}}\widetilde{V}(\widetilde{s}) =
\begin{cases}
    0, & \text{if } \mathcal{N}(\widetilde{s}) = d \text{ and } \mathcal{C}(\widetilde{s}) \leq \alpha, \\[6pt]
    -\infty, & \text{if } \mathcal{C}(\widetilde{s}) > \alpha, \\[6pt]
    \ln \left( 
        \sum_{\widetilde{s}' \in \widetilde{N}(\widetilde{s})} 
        \exp\!\left( \tfrac{1}{\mu_{\widetilde{s}}} \Bigl[ 
            \widetilde{v}(\widetilde{s}' \mid \widetilde{s}) 
            + \widetilde{V}(\widetilde{s}') 
        \Bigr] \right) 
    \right), & \text{otherwise}.
\end{cases}
\]

\paragraph{Choice probabilities.}  
The conditional choice probabilities follow directly as
\[
\widetilde{\pi}(\widetilde{s}' \mid \widetilde{s}) 
= \frac{\exp\!\left( \tfrac{1}{\mu_{\widetilde{s}}} \bigl[ 
    \widetilde{v}(\widetilde{s}' \mid \widetilde{s}) + \widetilde{V}(\widetilde{s}') \bigr]\right)}
{\sum\limits_{\bar{s} \in \widetilde{N}(\widetilde{s})} 
\exp\!\left( \tfrac{1}{\mu_{\widetilde{s}}} \bigl[ 
    \widetilde{v}(\bar{s} \mid \widetilde{s}) + \widetilde{V}(\bar{s}) \bigr]\right)}.
\]
Similar to the CRL formulation, 
any path that violates one or more feasibility constraints is systematically 
assigned zero choice probability. This is ensured by the recursive structure of 
the value function on the extended state space: whenever the accumulated cost 
exceeds the threshold in any constraint dimension, the corresponding extended 
state is assigned a value of $-\infty$. Consequently, all paths passing through 
such infeasible states are excluded from the feasible choice set. This guarantees 
that the probability distribution generated by the CNRL model is supported only 
on feasible paths, thereby preserving behavioral consistency while still 
capturing correlation patterns through the nesting structure.

\paragraph{Estimation via NFXP.}  
As in the CRL model, estimation of the constrained nested RL (CNRL) model can be carried out using the 
Nested Fixed-Point (NFXP) algorithm \citep{Rust87}. In the inner loop, the state-dependent value 
functions $\widetilde{V}(\widetilde{s})$ are computed by value iteration on the 
extended state space, using the recursion above. In the outer loop, the 
likelihood function is maximized with respect to the structural parameters 
$(\beta, \{\mu_{\widetilde{s}}\}_{\widetilde{s} \in \widetilde{\mathcal{S}}})$.  

\paragraph{Computational considerations.}  
Compared to CRL, the estimation of CNRL is more computationally demanding due to 
the heterogeneity of $\mu_{\widetilde{s}}$. In particular, the recursion must be 
solved with state-specific scaling, which precludes the simplification of 
rewriting the Bellman equations in linear form. Nevertheless, feasibility 
constraints imposed by the CRL framework still ensure that the extended state 
space is finite (under the assumption of discrete costs), and that infeasible 
states are pruned dynamically. This guarantees that the value iteration procedure 
converges and that the likelihood is well-defined.

\end{document}

\subsection{Recursive Logit}
The route choice problem has been formulated in \cite{fosgerau2013link} as a sequence of link choices which can be modeled in a dynamic discrete framework called the recursive logit model. Consider a directed connected network $G=(A;V)$ with the set of nodes $V$ and the set of links $A$, we denote $A(k)$ of each link $k\in A$ as the set of outgoing links from the sink node of $k$. 

We consider a separate system for each destination node and all trips that stop at that destination in the network. In each system, an absorbing state is created by extending a dummy link $d$ from the destination $D$. Every trip can be modeled as a sequence of link choices which ends at this state. The set of link choices in a system becomes $\Tilde{A} = A\cup \{d\}$. 

The deterministic utility $v^n(a|k;\beta)$ depends on a vector of characteristics $x_{n,a|k}$ and a parameter vector $\beta$ to be estimated. The instantaneous utility associated with an action $a\in A(k)$ of an individual $n$ can be determined as:
\begin{equation}
    u^n(a|k;\beta) = v^n(a|k;\beta) + \mu \epsilon(a)
\end{equation}
where the random term $\epsilon(a)$ is independent and identically distributed (i.i.d.) extreme value type I. The utility of destination $d$ is zero ($v^n(d|k;\beta) = 0$). Therefore, the expected maximum utility from a link $k$ to the destination $d$ can be recursively calculated as
\begin{equation}\label{eq:value-unconstrained}
    V^d(k;\beta)  = 
    \begin{cases}
    0,   & k = d \\
    \mu \ln{\left(\sum_{a\in A(k)} e^{\frac{1}{\mu}(v(a|k;\beta)+V^d(a;\beta))}\right)},   & \forall k\in A
    \end{cases}
\end{equation}

The probability of a path $\sigma = \{k_0,...,k_l, d\}$ to be selected is
\begin{equation}\label{eq:prob-unconstrained}
    P(\sigma;\beta) = e^{-V^d(k_0;\beta)} \prod_{i=1}^l e^{v(k_{i+1}|k_i;\beta)}
\end{equation}

For simplicity, from this point onward we omit the superscript $d$ and parameters $\beta$ from the formulas, as seen in \autoref{fig:notations-unconstrained}.

\begin{figure}[htb]
    \centering
    \includegraphics{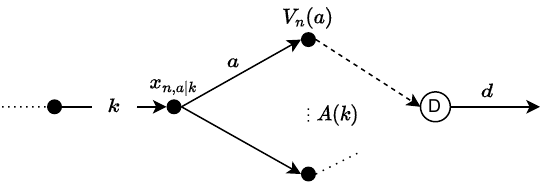}
    \caption{Notations in the original recursive logit model.}
    \label{fig:notations-unconstrained}
\end{figure}

\subsection{Entropy-regularized MDP Formulation}

To facilitate our subsequent discussion, we present logit-based dynamic discrete choice and entropy-regularized MDP formulations for the RL model. Dynamic discrete choice or MDP models are typically described using states, actions, and transition probabilities, which define the likelihood of transitioning to a new state when an action is taken. However, in the context of route choice and RL, actions are often interpreted as selecting the next link in the route. Since a traveler deterministically arrives at the chosen link, there is no inherent stochastic dynamic in the system. To simplify our approach, we adapt the description of dynamic discrete choice models to align with the specific characteristics of route choice and the RL model.

\section{Constrained recursive logit model}

The original route choice problem assumes that there is no restriction on the routes. However, in real-world scenarios, the route choices are often limited by multiple factors. For example in vehicle routing, these can range from a simple maximum distance threshold to complicated restrictions related to fuel maintenance, travel fees, etc. In this section, we propose a method to integrate these constraints into the original recursive model.

Assume the route attributes related to the applied constraints can be tracked sequentially through links in the route. Let $\Vec{s_k}$ be a discrete status vector that store all attributes of the sub-route from the beginning link $k_0$ to a current link $k$. Denote $f(\Vec{s_k},x_{n,a|k})$ the function to update vector $\Vec{s_k}$ to new statuses after traversing a link $a\in A(k)$ with associated characteristics $x_{n,a|k}$. 

Each state in the dynamic framework is associated with both the link $k$ and the constraint-related status vector $\Vec{s_k}$. In order to limit the recursive model to only valid route choices, only $(k; \Vec{s_k})$ states which satisfies problem constraints are allowed to appear in the dynamic framework or lead to one of dummy link $d$'s states. Moreover, since all states of link $d$ with different status vector $\Vec{s_d}$ have the same role, we combine all of them into only one state $(d; \Vec{0})$. 

\autoref{eq:value-unconstrained} can be updated as:
\begin{equation}
    V(k; \Vec{s_k})  = 
    \begin{cases}
    0,   & k = d, \Vec{s_k} = \Vec{0} \\
    -\infty, & A(k) = \emptyset\\
    \mu \ln{\left(\sum_{a\in A(k)} e^{\frac{1}{\mu}(v(a|k)+V^d(a; \Vec{s_a})}\right)},   & otherwise
    \end{cases}
\end{equation}
where $\Vec{s_a} = f(\Vec{s_k}, x_{n,a|k})$.

From \autoref{eq:prob-unconstrained}, the likelihood of a route $\sigma = \{k_0,...,k_l, d\}$ becomes:
\begin{equation}\label{eq:prob-constrained}
    P(\sigma) = e^{-V^d(k_0; \Vec{s_{k_0}})} \prod_{i=1}^l e^{v(k_{i+1}|k_i)}
\end{equation}
where $\Vec{s_{k_0}}$ is the status vector at the first link of the route $\sigma$. 

\begin{figure}[htb]
    \centering
    \includegraphics{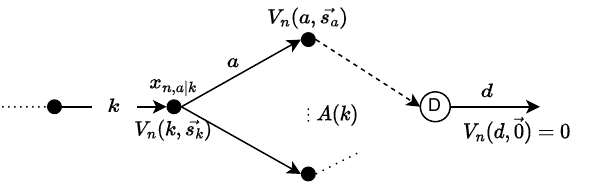}
    \caption{Notations in the constrained recursive logit model.}
    \label{fig:notations-constrained}
\end{figure}

Notations of the constrained models are illustrated in \autoref{fig:notations-constrained}. The key difference compared to the original is the status vector integrated into each state of the equation system. However, one assumption is made that the status vector is discrete, so that the number of states is finite. In real-world problems, constraints often involve continuous variables. Thus, a method to discretize constraint-related features must be defined before the proposed model can be applied.

\section{Examples}
This section investigates two common real-world scenarios where route-choice problems have choice sets restricted by specific constraints. Constrained recursive logit models with appropriate discretization methods are demonstrated on these two examples.

\subsection{Scenario 1: Route choice problem with upperbound constraint}
The first scenario considers a route choice problem in which all valid routes cannot exceed a travel time threshold $TT_{max}$. This is a common case in real world: A taxi driver chooses a route to pick up customer on time, or a shipper plans to deliver packages before a deadline.

Denote $s^{TT}_k$ the total travel time from the origin link $k_0$ to a certain link $k$. The status vector in this case consists of only one element: $\Vec{s_k} = (s^{TT}_k)$. The update function $f$ to track $s^{TT}_a$ of a link $a\in A(k)$ can be defined as:
\begin{equation}\label{eq:bound-update}
    s^{TT}_a = f(s^{TT}_k, TT(a)) = s^{TT}_k + TT(a)
\end{equation}
where $TT(a)$ is the travel time through link $a$.

A route $\sigma = \{k_0,...,k_l, d\}$ is feasible if its total travel time doesn't exceed $TT_{max}$: 
\begin{equation}
    s^{TT}_d \leq TT_{max}
\end{equation}

\begin{figure}[htb] 
    \begin{subfigure}{0.5\linewidth}
         \centering
         \includegraphics[width=\textwidth]{images/bound-scenario.pdf}
         \caption{Problem illustration}\label{subfig:bound}
    \end{subfigure}
    \begin{subfigure}{0.5\linewidth}
         \centering
         \includegraphics[width=\textwidth]{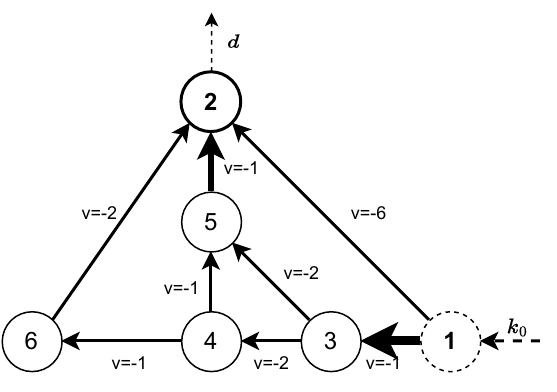}
         \caption{Feasible routes without bound constraint}\label{subfig:bound-unconstrained}
    \end{subfigure}
    \caption{Scenario 1 example with original recursive logit model.} 
    \label{fig:bound} 
\end{figure}

An instance of this problem is depicted in \autoref{subfig:bound}. The instance contains 6 nodes and 9 links (excluding dummy link $d$), with associated travel times $TT(k)$ in hour(s) noted on each link. All routes start at $k_0$ and end at $d$. The instantaneous utility function is defined by this expression: $v(a|k) = -2\times TT(a)$.

Assuming there is no limit on total time travel, all feasible paths with associated link utilities are illustrated in \autoref{subfig:bound-unconstrained}. The width of links in the figure reflect how many paths pass that link. Thus link $13$ is the thickest since there are 3 paths travelling through it. Then the original recursive logit model is applied to estimate the path probabilities by \autoref{eq:prob-unconstrained}, as reported in \autoref{tab:bound-unconstrained}. 

\begin{table}[htb]
    \centering
    \begin{tabular}{ll}
        \hline
        $k$    & $V(k)$    \\ \hline
        $k_0$     & -3.506 \\
        $12$     & 0 \\
        $13$     & -2.592 \\
        $34$     & -1.687 \\
        $35$     & -1 \\
        $45$     & -1 \\
        $46$     & -2 \\
        $52$     & 0 \\
        $62$     & 0 \\
        $d$         & 0 \\
    \end{tabular}
    \quad
    \begin{tabular}{l|l|l|l}
No. & Path              & $P(\sigma^n)$                   \\\hline
1   & 12                & 0.083 \\
2   & 13, 35, 52        & 0.610 \\
3   & 13, 34, 45, 52    & 0.224 \\
4   & 13, 34, 46, 62    & 0.083
\end{tabular}
    \caption{Value function and path choice probabilities without upperbound.}
    \label{tab:bound-unconstrained}
\end{table}

If there is a bound constraint in which $TT_{max}=2.5$, the constrained model is used instead of the original. First, the travel time must be converted into a discrete metric for the status vector. In this example, we simply multiply each time travel value by $2$ and then round it to the nearest integer. The update function becomes:
\begin{equation}
    s'^{TT}_a = f'(s'^{TT}_k, TT(a)) = s'^{TT}_k + round(2\times TT(a))
\end{equation}
The discretized $round(2\times TT(a))$ values are shown in \autoref{subfig:bound-discrete}. There is no numeric precision loss in this example, but the opposite situation often occurs. For instance, if the multiplier is $1$ instead of $2$, values such as $0.5$ will be rounded to $1$ and may causes errors in the status vector. Higher multipliers reduce numeric errors but also increase number of status values which leads to more nodes and links in the network. 

\begin{figure}[htb] 
    \begin{subfigure}{0.5\linewidth}
         \centering
         \includegraphics[width=\textwidth]{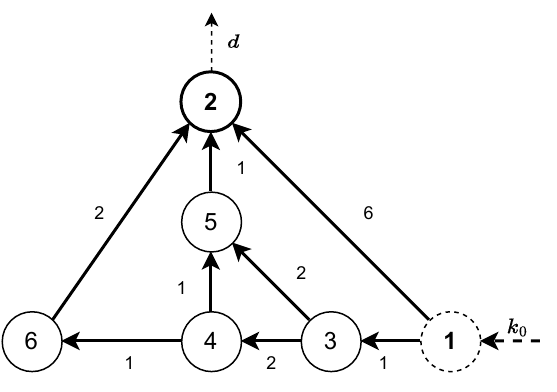}
         \caption{Discretized link travel times}\label{subfig:bound-discrete}
    \end{subfigure}
    \begin{subfigure}{0.5\linewidth}
         \centering
         \includegraphics[width=\textwidth]{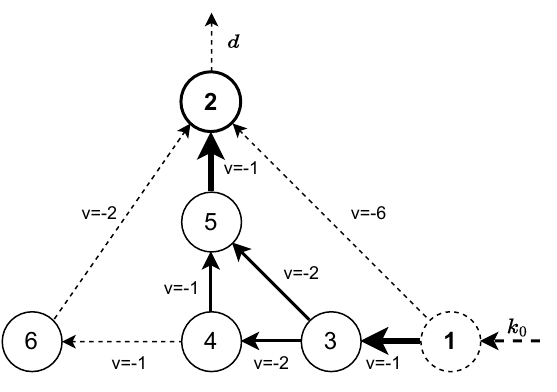}
         \caption{Feasible routes with upperbound $TT_{max}=2.5$}\label{subfig:bound-constrained}
    \end{subfigure}
    \caption{Scenario 1 example with constrained recursive logit model.} 
    \label{fig:bound-constrained} 
\end{figure}

According to the discretization process, the constraint becomes: $s'^{TT}_d\leq 5$. Two previous paths (12) and (13, 34, 46, 62) are no longer valid due to violating this constraint. The remaining feasible paths are illustrated in \autoref{subfig:bound-constrained}. Some links (represented by dashed lines) don't belong to any valid path. 

We estimate the path probabilities with the constrained model and report results in \autoref{tab:bound-constrained}. In this table, $(k, s^{TT}_k)$ is replaced with $k(s^{TT}_k)$. (For example 12(3) means link 12 with travel time status of 3) 

\begin{table}[htb]
    \centering
    \begin{tabular}{ll}
        \hline
        $k(s^{TT}_k)$    & $V(k, s^{TT}_k)$    \\ \hline
        $k_0(0)$     & -3.687 \\
        $12(6)$     & $-\infty$ \\
        $13(1)$     & -2.687 \\
        $34(3)$     & -2 \\
        $35(3)$     & -1 \\
        $45(4)$     & -1 \\
        $46(4)$     & $-\infty$ \\
        $52(4)$     & 0 \\
        $52(5)$     & 0 \\
        $62(6)$     & $-\infty$ \\
        $d$         & 0 \\
    \end{tabular}
    \quad
    \begin{tabular}{l|l|l|l}
No. & Path              & $P(\sigma^n)$                   \\\hline
1   & 12(6)                & 0.000 \\
2   & 13(1), 35(3), 52(4)        & 0.731 \\
3   & 13(1), 34(3), 45(4), 52(5)    & 0.269 \\
4   & 13(1), 34(3), 46(4), 62(6)    & 0.000
\end{tabular}
    \caption{Value function and path choice probabilities with upperbound $TT_{max}=2.5$.}
    \label{tab:bound-constrained}
\end{table}

\subsection{Rechargeable vehicles on grid graph}
We further investigate the performance of the constrained recursive logit model on a larger grid graph of 25 nodes. The graph, illustrated in \autoref{fig:grid}, contains 4 stations near two corners similar to scenario 2. The vehicle travels on this graph from node 1 to node 25. The instantaneous utility is still $v(a|k) = -TT(a)$.

\begin{figure}[htb] 
    \centering
    \includegraphics[width=0.8\linewidth]{images/grid.pdf}
    \caption{Grid graph with 25 nodes including 4 stations.} 
    \label{fig:grid} 
\end{figure}

The travel times on links are detailed in the figure. We can see that links closer to the diagonal line connecting 1 and 25 have shorter travel time, and the further away a link is, the longer time it takes. On the other hand, \autoref{eq:prob-unconstrained} points out that the likelihood of a path has a positive correlation with total instantaneous utility values of its link. Therefore, in an unrestricted route choice set, paths near the diagonal between 1 and 25 tends to have higher probability. When there is no restriction, we use the recursive logit model to analyze path probabilities and prove this characteristic as seen in \autoref{fig:grid-unconstrained}. Links with appear more often in path choices concentrate around the center area and are depicted by thicker lines. 

Assuming there is an energy constraint with $TT_{max}=7$, the path distribution changes drastically. Paths around the center completely disappear due to the lack of stations near that area. In contrast, paths that successfully visit at least one station at the corner area are still valid and relatively retain their probability distribution. The link probabilities estimated by the constrained model and shown in \autoref{fig:grid-constrained} prove that the constrained model perfectly capture these features,

\begin{figure}[htb]
    \centering
    \begin{subfigure}[t]{0.48\linewidth}
        \centering
        \includegraphics[width=\linewidth]{images/grid-unconstrained.pdf}
        \caption{Path probabilities in grid graph with unrestricted choice set.}
        \label{fig:grid-unconstrained}
    \end{subfigure}
    \hfill
    \begin{subfigure}[t]{0.48\linewidth}
        \centering
        \includegraphics[width=\linewidth]{images/grid-constrained.pdf}
        \caption{Path probabilities in grid graph with energy constraint of $TT_{max}=7$.}
        \label{fig:grid-constrained}
    \end{subfigure}
    \caption{Comparison of path probabilities in grid graph under (a) unrestricted choice set and (b) energy-constrained setting.}
    \label{fig:grid-comparison}
\end{figure}

